\title{Modelling publication bias and \textit{p}-hacking}
\author{
  Jonas Moss \\
  Department of Mathematics, University of Oslo\\
  PB 1053, Blindern, NO-0316, Oslo, Norway \\
  \it{jonasmgj@math.uio.no} \\ 
     \And
  Riccardo De Bin \\
  Department of Mathematics, University of Oslo\\
  PB 1053, Blindern, NO-0316, Oslo, Norway \\
  \it{debin@math.uio.no} \\ 
}
\theoremstyle{plain}
\newtheorem{theorem}{\protect\theoremname}
\theoremstyle{definition}
\newtheorem{example}[theorem]{\protect\examplename}
\providecommand{\examplename}{Example}
\providecommand{\theoremname}{Theorem}
\def\bSig\mathbf{\Sigma}
\newtheorem{prop}[theorem]{Proposition}
\renewcommand{\sqrt}[1]{(#1)^{1/2}}
\providecommand{\tabularnewline}{\\}
\providecommand{\algorithmname}{Algorithm}
\begin{document}
\maketitle
\begin{abstract}
Publication bias and \textit{p}-hacking are two well-known phenomena that strongly affect the scientific literature and cause severe problems in meta-analyses. Due to these phenomena, the assumptions of meta-analyses are seriously violated and the results of the studies cannot be trusted. While publication bias is almost perfectly captured by the weighting function selection model, \textit{p}-hacking is much harder to model and no definitive solution has been found yet. In this paper we propose to model both publication bias and \textit{p}-hacking with selection models. We derive some properties for these models, and we compare them formally and through simulations. Finally, two real data examples are used to show how the models work in practice.
\end{abstract}

\keywords{meta-analysis \and publication bias \and p-hacking}











\section{Introduction}

Meta-analysis the quantitative combination of information from different studies. Aggregating information from multiple studies brings about higher statistical power, higher accuracy in estimation and greater reproducibility. Unfortunately, it is not always possible to believe in the results of meta-analyses, as some model assumptions may be seriously violated. In particular, a meta-analysis must not be based on a biased selection of studies. Publication bias \citep{sterling1959publication} and \textit{p}-hacking \citep{simmons2011false} are the most common phenomena that violate these assumptions. 

Publication bias, also known as the file drawer problem, \citep[see, e.g.,][]{iyengar1988selection} denotes that phenomenon when a study with a smaller \textit{p}-value is more likely to be published than a study with a higher \textit{p}-value. Publication bias is a well known issue, and several approaches have been proposed to tackle it. Two famous examples are the trim-and-fill \citep{duval2000trim} and fail-safe $N$ \citep{becker2005failsafe} methods. Neither of them are \textit{bona fide} statistical models with likelihoods and properly motivated estimation strategies. From a statistical point of view, the most important class of models which are used to deal with publication bias are selection models. They were first studied by \citet{hedges1984estimation} for $F$-distributed variables with a cut-off at $0.05$, and extended to the setting of $t$-values by \citet{iyengar1988selection}. \citet{hedges1992modeling} proposed a random effects publication bias model with more than one cut-off, while \citet{citkowicz2017parsimonious} used beta distributed weights.

Publication bias is a well-known problem in several research areas, and therefore various approaches to solve the issue have been also proposed outside the statistical literature. Hailing from economics, the models PET and PET-PEESE \citep{stanley2014meta,stanley2017limitations} are two models based on linear regression and an approximation of the selection mechanism based on the inverse Mill's ratio. From psychology, the \textit{p}-curve of \citet{simonsohn2014p} is a method that only looks at significant \textit{p}-values and judges whether their distribution shows sign of being produced by studies with insufficient power. The \textit{p}-curve for estimation \citep{simonsohn2014} is a fixed effect selection model with a significance cut-off at $0.05$ estimated by minimizing the Kolmogorov-Smirnov distance \citep{mcshane2016adjusting}. Another method from the psychology literature is \textit{p}-uniform \citep{van2015meta}, which is similar to the \textit{p}-curve. A recent study by \citet{carter2019correcting} compared several approaches and showed that the selection model works better than the others. However, not even the best method works well in every considered scenario. For more information on publication bias and a good review of large part of these methods we refer to the book by \citet{rothstein2006publication}.

In contrast, \textit{p}-hacking, sometimes also called \emph{questionable research practices} \citep{Sijtsma2016} and \emph{fishing for significance} \citep{Boulesteix2009}, occurs when the authors of a study manipulate results into statistical significance. \textit{p}-hacking can be done at the experimental stage, using for example optional stopping, or at the analysis stage, for instance by changing models or dropping out participants. Examples of \textit{p}-hacking can be found in \citet{simmons2011false}. While publication bias is almost perfectly captured by selection models such as that of Hedges \citeyearpar{hedges1992modeling}, \textit{p}-hacking is much harder to model. The aforementioned \textit{p}-curve approach by \citet{simonsohn2014p} has been used for \textit{p}-hacking as well, but it has been shown to be not reliable \citep{BrunsIoannidis2016}. Here we advocate the selection model approach and propose to use it to model both publication bias and \textit{p}-hacking. We derive some properties for these models and argue they are best handled by Bayesian methods. 

The paper is organized as follows: In section \ref{sect:models} we define the framework and introduce the models, which are analysed and compared in section \ref{sect:differences}. Further comparisons are presented through simulations in section \ref{sect:simulations} and real data examples in section \ref{sect:examples}. We end with some concluding remarks in section \ref{sect:conclusions}.

\section{Models}\label{sect:models}
\subsection{Framework}
The main ingredient of a meta-analysis is a collection of exchangeable statistics $x_{i}$. Each statistic $x_{i}$ has density $f^{\star}(x_{i};\theta_{i},\eta_{i})$, where $\eta_i$ is a known or unknown nuisance parameter and $\theta_{i}$ is an unknown parameter we wish to do inference on. This paper is about the fact that the true data-generating model $f^{\star}(x_{i};\theta_{i},\eta_{i})$ is often not what it ideally should have been, such as a normal density. It has instead been transformed into something else by the forces of publication bias and \textit{p}-hacking. Our goal is to understand what it has been transformed into, and how we can estimate $\theta_{i}$ accordingly. The publication bias model of \citet{hedges1992modeling,iyengar1988selection} and the soon-to-be introduced \textit{p}-hacking model are models that transform the underlying densities, denoted by $f^{\star}(x_{i};\theta_{i},\eta_{i})$, into new densities, $f_{i}(x_{i};\theta_{i},\eta_{i})$. The underlying densities will usually be normal, but they do not have to. The theoretical discussion in this paper will not enforce normality anywhere, but all examples of models are based on underlying normal distributions. We only require the dependencies on a parameter of interest $\theta_{i}$ and that statistical inference on $\theta_{i}$ is the goal of the analysis.

The parameter $\theta_{i}$ is typically an effect size, such as a standardized mean difference. In a fixed effects meta-analysis, $\theta_{i}=\theta$ for all $i$. In a random effects meta-analysis, $\theta_{i}$ is drawn from an effect size distribution $p(\theta)$ common to all $i$, and the goal of the study is often to make inference on the parameters of the effect size distribution, for example on the mean $\theta_{0}$ and the standard deviation $\tau$ when $\theta \sim N(\theta_{0},\tau)$. If we marginalize away $\theta_{i}$ we will end up with a density on the form $f(x_{i}; \theta_{0},\sqrt{\sigma_{i}^{2}+\tau^{2}})$, assuming $x_{i}$ is also from a normal distribution with standard deviation $\sigma_{i}$ (i.e., $\eta_i = \sigma_i$). This is possible in our framework, but it turns out that an important property of the publication bias model gets lost, as marginalizing out the $\theta_{i}$s can mask the fact that the selection mechanism in the publication bias has an effect both on the effect size distribution and the individual densities $f_{i}(x_{i};\theta_i, \sigma_i)$.

In this paper we use Bayesian methods. While a frequentist approach is in theory possible, it will lead to poor results. As noted by \citet[Appendix, 1]{mcshane2016adjusting}, the one-sided random effects models have ridges in their likelihood, which may make non-regularized estimates imprecise. In particular, it can be proved \citep{Moss2019} there are no confidence sets of guaranteed finite size for $\theta_{0}$ and $\tau$ in the one-sided normal random effect models, for any coverage $1-\alpha$. This is problematic for two reasons: (i) It would be useless to report a confidence set for $\tau^{2}$ like $[0.5,\infty)$, as no one would be confident about an infinite value for that parameter; (ii) the automatic confidence sets procedures that are guaranteed to yield finite confidence set of some positive nominal coverage, such as bootstrapped confidence sets, likelihood-ratio based confidence sets, and subsampling confidence sets never have true coverage greater than $0$ \citep[see][]{gleser996bootstrap, Moss2019}. The role of priors in the Bayesian approach here is to force the estimates away from highly implausible areas; ad hoc penalization or bias corrections are necessary for frequentist methods to work well.

\subsection{The publication bias model} \label{subsect:publicationBias}

Imagine the publication bias scenario:
\begin{quote}
Alice is an editor who receives a study with a \textit{p}-value $u$. She knows her journals will suffer if she publishes many null-results, so she is disinclined to publish studies with large \textit{p}-values. Still, she will publish any result with some \textit{p}-value-dependent probability $w(u)$. Every study you will ever read in Alice's journal has survived this selection mechanism, the rest are lost forever.
\end{quote}
In this story, the underlying model $f^{\star}(x_{i}\mid\theta_{i},\eta_{i})$
is transformed into a publication bias model
\begin{equation}
f(x_{i}\mid\theta_{i},\eta_{i})\propto f^{\star}(x_{i}\mid\theta_{i},\eta_{i})w(u)\label{eq:Publication bias model}
\end{equation}
by the selection probability $w(u)$. Here $u$ is a \textit{p}-value that depends on $x_{i}$ and maybe something else, such as the standard deviation of $x_{i}$, but does not depend on $\theta_{i}$. It cannot depend on $\theta_{i}$ since the editor has no way of knowing the parameter $\theta_{i}$; if she did, she would not have to look at the \textit{p}-values at all. It might depend on other quantities modelled by $\eta_{i}$ though, if $\eta_{i}$ is known to the editor. The normalizing constant of model \eqref{eq:Publication bias model} is finite for any probability $w(u)$, hence $f$ is a \textit{bona fide} density.

An argument against the publication bias scenario is that publication bias does not act only through \textit{p}-values, but also through other features of the study such as language \citep{egger1998meta}
and originality \citep{callaham1998positive}. While this is true, the publication bias scenario seems to completely capture the idea of \textit{p}-value based publication bias. Moreover, the \textit{p}-value-based publication bias is more relevant to meta-analysis than the other sources of bias mentioned above. Even if other sources of publication bias exist, maybe acting through $x_{i}$ but not its \textit{p}-value, publication bias based on \textit{p}-values is a universally recognized problem, and a good place to start.

The kind of model sketched here is almost the same as the one of \citet{hedges1992modeling}, with the sole exception that \citet{hedges1992modeling} does not require $w(u)$ to be a probability. The only requirement is that the integral of $f^{\star}(x_{i}\mid\theta_{i},\eta_{i})w(u)$ is finite, which can happen without $w(u)$ being a probability). We demand that $w(u)$ to be a probability since the intuitive publication bias scenario interpretation of the model disappears when $w(u)$ is not a probability. Anyway, there are many choices for $w(u)$ even when we force it to be a probability. Assume $w^{\star}(u)$ is any bounded positive function in $\left[0,1\right]$, and define $w(u)=w^{\star}(u)/\textrm{sup}\{w^{\star}(u)\}$. Then $w(u)$ is a probability for each $u$, and fits right into the publication bias framework. An easy way to generate examples of such functions is to take density functions on $\left[0,1\right]$ and check if they are bounded. For instance, beta densities are bounded whenever both shape parameters are greater than $1$. The beta density is used in the publication bias model of \citet{citkowicz2017parsimonious}, but they do not demand it to be a probability.

Even if we know the underlying $f^{\star}(x_{i}\mid\theta_{i},\eta_{i})$ of model \eqref{eq:Publication bias model}, we will need to decide on what \textit{p}-value to use. Usually, the \textit{p}-value will be approximately a one-sided normal \textit{p}-value, but it might be something
else instead. A one-sided normal \textit{p}-value makes sense because most hypotheses have just one direction that is interesting. For instance, the effect of an antidepressant must be positive for the study to be publishable. A one-sided \textit{p}-value can also be used if the researchers reported a two-sided value, since $p=0.05$ for a two-sided hypothesis corresponds to $p=0.025$ for a one-sided hypothesis. We will use the one-sided normal \textit{p}-value in all examples in this paper.

Provided we know the underlying $f_{i}^{\star}$s and \textit{p}-values $u$, we only need to decide on the selection probability to have a fully specified model. \citet{hedges1992modeling} proposes the discrete selection probability
\begin{equation}
w(u\mid\rho,\alpha)=\sum_{j=1}^{J}\rho_{j}1_{(\alpha_{j-1},\alpha_{j}]}(u),\label{eq:Weighted model step function}
\end{equation}
where $\alpha$ is a vector with $0=\alpha_{0}<\alpha_{1}<\cdots<\alpha_{J}=1$ and $\rho$ is a non-negative vector with $\rho_{1}=1$. The interpretation of this selection probability is simple: When Alice reads the \textit{p}-value $u$, she finds the $j$ with $u\in(\alpha_{j-1},\alpha_{j}]$ and accepts the study with probability $\rho_{j}$. Related to this view, \citet{hedges1992modeling} proposed $\alpha_{[1,\dots,J-1]} = (0.001,0.005,0.01,0.05)$, as these \enquote{have particular salience for interpretation} \citep{hedges1992modeling}. In fact, a publication decision often depends on whether a \textit{p}-value crosses the $0.05$-threshold. His reason for using more split points than just $0.05$ is that \enquote{It is probably unreasonable to assume that much is known about the functional form of the weight function} \citep{hedges1992modeling}. While this is true, one may prefer, considering the bias-variance trade-off heuristic, to only use one split point at $0.05$, as done by \citet{iyengar1988selection} in their second weight function. Other reasons to prefer one split are ease of interpretation and presentation. Nevertheless, only using $0.05$ as a threshold for one-sided \textit{p}-values is problematic, as many published results are calculated using a two-sided \textit{p}-value instead. It seems therefore useful to add an additional splitting point at $0.025$, as a two-sided \textit{p}-value at that level corresponds to a one-sided \textit{p}-value of $0.05$. Ergo, we propose a two-step function selection probability
\[
w(u\mid\rho)=1_{[0,0.025)}(u)+\rho_{2}1_{[0.025,0.05)}(u)+\rho_{3}1_{\left[0.05,1\right]}(u),
\]
where the selection probability when $u\in[0,0.025)$ is normalized to $1$ to make the model identifiable.

The following proposition shows the densities of the one-sided normal step function selection probability publication bias models, with fixed effects and with normal random effects, respectively. Here the notation $\phi_{\alpha}(x;\theta,\sigma)$ indicates a normal truncated to $[a,b)$.
\begin{prop}
\label{prop:One-sided normal discrete probability vector publication bias model-1}
The density of an observation from a fixed effects one-sided normal step function selection probability publication bias model is
\begin{equation}\label{eq:Fixed effects, publication bias}
f(x_{i};\theta_{i},\sigma_{i}) = \sum_{j=1}^{N}\pi_{j}^\star\phi_{[\Phi^{-1}(1-\alpha_{j}),\Phi^{-1}(1-\alpha_{j-1}))}(x_{i}\mid\theta_{i},\sigma_{i}),
\end{equation}
where
$$
\pi_{j}^{\star}=\rho_{j}\frac{\Phi(c_{j-1}\mid\theta_{i},\sigma_{i})-\Phi(c_{j}\mid\theta_{i},\sigma_{i})}{\sum_{j=1}^{N}\rho_{j}\left[\Phi(c_{j-1}\mid\theta_{i},\sigma_{i})-\Phi(c_{j}\mid\theta_{i},\sigma_{i})\right]}
$$
and $c_{j}=\Phi^{-1}(1-\alpha_{j})$.

The density of an observation from the one-sided normal step function selection probability publication bias model with normal random effects and parameters $\sigma_{i},\theta_{0},\tau,$ is
\begin{equation}\label{eq:Random effects, publication bias}
f(x\mid\theta_{0},\tau,\sigma_{i})=\sum_{j=1}^{N}\pi_{j}^{\star}(\theta_0,\tau,\sigma_{i})\phi_{[\Phi^{-1}(1-\alpha_{j}),\Phi^{-1}(1-\alpha_{j-1}))}(x\mid\theta_{0},\sqrt{\tau^{2}+\sigma_{i}^{2}}),
\end{equation}
where 
\[
\pi_{j}^{\star}(\theta_0,\tau,\sigma_{i})=\rho_{j}\frac{\Phi(c_{j-1}\mid\theta_{0},\sqrt{\tau^{2}+\sigma_{i}^{2}})-\Phi(c_{j}\mid\theta_{0},\sqrt{\tau^{2}+\sigma_{i}^{2}})}{\sum_{j=1}^{J}\rho_{j}\left[\Phi(c_{j-1}\mid\theta_{0},\sqrt{\tau^{2}+\sigma_{i}^{2}})-\Phi(c_{j}\mid\theta_{0},\sqrt{\tau^{2}+\sigma_{i}^{2}})\right]}.
\]
\end{prop}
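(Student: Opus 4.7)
The plan is to carry out a direct computation for the fixed effects density \eqref{eq:Fixed effects, publication bias} and then reduce the random effects density \eqref{eq:Random effects, publication bias} to it after integrating out $\theta_i$. First I would substitute the step function selection probability \eqref{eq:Weighted model step function} and the normal density $f^\star(x_i\mid\theta_i,\sigma_i)=\phi(x_i\mid\theta_i,\sigma_i)$ into \eqref{eq:Publication bias model}, giving
\[
f(x_i\mid\theta_i,\sigma_i)\;\propto\;\phi(x_i\mid\theta_i,\sigma_i)\sum_{j=1}^{J}\rho_j 1_{(\alpha_{j-1},\alpha_j]}(u).
\]
Because the one-sided p-value $u$ is a strictly decreasing function of $x_i$, the event $\{u\in(\alpha_{j-1},\alpha_j]\}$ coincides with $\{x_i\in[c_j,c_{j-1})\}$, where $c_j=\Phi^{-1}(1-\alpha_j)$ and $c_j<c_{j-1}$ since $\alpha_{j-1}<\alpha_j$.

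Next, on each interval $[c_j,c_{j-1})$ I would recognise the restricted normal as a truncated normal times its mass:
\[
\phi(x_i\mid\theta_i,\sigma_i)1_{[c_j,c_{j-1})}(x_i)=\bigl[\Phi(c_{j-1}\mid\theta_i,\sigma_i)-\Phi(c_j\mid\theta_i,\sigma_i)\bigr]\phi_{[c_j,c_{j-1})}(x_i\mid\theta_i,\sigma_i).
\]
Integrating each term and summing over $j$ then produces the normalising constant $Z=\sum_{j=1}^{J}\rho_j[\Phi(c_{j-1}\mid\theta_i,\sigma_i)-\Phi(c_j\mid\theta_i,\sigma_i)]$. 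Dividing the expression above by $Z$ returns \eqref{eq:Fixed effects, publication bias} with the stated $\pi_j^\star$.

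For the random effects density I would exploit the fact, stressed in Section~\ref{subsect:publicationBias}, that $w(u)$ depends on $x_i$ and $\sigma_i$ only, not on $\theta_i$. It therefore factors out of the integral against the effect prior $\theta_i\sim N(\theta_0,\tau^2)$:
\[
f(x_i\mid\theta_0,\tau,\sigma_i)\propto w(u)\int\phi(x_i\mid\theta_i,\sigma_i)\phi(\theta_i\mid\theta_0,\tau)d\theta_i=w(u)\phi(x_i\mid\theta_0,\sqrt{\tau^2+\sigma_i^2})
\]
by the normal--normal convolution. The right-hand side has exactly the shape of the unnormalised fixed effects density with $\theta_i\mapsto\theta_0$ and $\sigma_i\mapsto\sqrt{\tau^2+\sigma_i^2}$, so the two-step calculation above transfers verbatim and delivers \eqref{eq:Random effects, publication bias}.

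The main obstacle is really just bookkeeping. One must be careful that the p-value is monotone decreasing in $x_i$, so that $(\alpha_{j-1},\alpha_j]$ maps to $[c_j,c_{j-1})$ rather than $[c_{j-1},c_j)$, and one must check explicitly that the selection probability has no dependence on $\theta_i$ before interchanging it with the convolution integral in the random effects step. Both points are conceptual rather than computational, and once they are in place the proposition reduces to a renormalisation exercise for a weighted normal density.
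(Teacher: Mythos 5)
Your derivation is correct and is exactly the routine calculation the paper leaves implicit (the paper states this proposition without an explicit proof): substitute the step function into the selection model, convert the $p$-value bands to $x$-intervals via $c_j=\Phi^{-1}(1-\alpha_j)$, and renormalise. In particular, you correctly handle the one subtlety the paper flags immediately after the statement --- in the random effects case you pull $w(u)$ out of the normal--normal convolution and normalise only \emph{after} marginalising $\theta_i$, rather than integrating the already-normalised fixed-effects density against $\phi(\theta_i\mid\theta_0,\tau)$, which would give a different (wrong) answer.
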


Here $f(x_i\mid\theta_{0},\tau,\sigma_i)$ is not equal to $\int f(x_{i};\theta_{i},\sigma_{i})\phi(\theta_{i};\theta_{0},\tau)d\theta_{i}$,
as it might have been expected. See the appendix for more details.

\subsection{The \textit{p}-hacking model}\label{subsect:p-hacking}
Imagine the \textit{p}-hacking scenario:
\begin{quote}
Bob is an astute researcher who is able to \textit{p}-hack any study to whatever level of significance he wishes. Whenever Bob does his research, he decides on a significance level to reach by drawing an $\alpha$ from a distribution $\omega$. Then he \textit{p}-hacks his study to this $\alpha$-level.
\end{quote}
In this scenario the original density $f^{\star}(x_{i};\theta_{i},\eta_{i}, u)$
is transformed into the \textit{p}-hacked density
\begin{equation}\label{eq:p-hacking model}
f(x_{i};\theta_{i},\eta_{i})=\int_{[0,1]}f_\alpha^{\star}(x_{i};\theta_{i},\eta_{i}, u)d\omega(\alpha),
\end{equation}
where $f_\alpha^{\star}$ is the density $f^{\star}$ truncated so that the \textit{p}-value $u\in\left[0,\alpha\right]$. Let us call the distribution $\omega$ the \emph{propensity to p-hack}. It might depend on covariates, but should not depend on $\theta_{i}$, as the researcher cannot know the true effect size of his study. While publication bias model (\ref{eq:Publication bias model}) is a selection model, the \textit{p}-hacking model (\ref{eq:p-hacking model}) is clearly a mixture model. The publication bias can also be written as a mixture model on the same form as the \textit{p}-hacking model, but then $\omega$ will depend on $\theta$, see the appendix. We stress the fact that the model \eqref{eq:p-hacking model} is not a publication bias model. Although the \textit{p}-hacking model can be written as a selection model (i.e., on the form of \eqref{eq:Publication bias model}), in general the publication probability will depend on the true effect size, which violates an obvious condition for a model to be considered a publication bias model. See section \ref{sect:differences} for a detailed comparison of the two models.

Just as the publication bias model requires a choice of $w$, the \textit{p}-hacking model requires a choice of $\omega$. A \textit{p}-hacking scientist is motivated to \textit{p}-hack to the $0.05$ level, maybe to the $0.01$ or $0.025$, but never to a level such as $0.07$ or $0.37$. This motivates the discrete \textit{p}-hacking probability distribution
$$\omega(\alpha\mid\pi)=\sum_{j=1}^{J}\pi_{j}1_{(0,\alpha_{j}]}(\alpha)$$
for some $j$-ary vector $\alpha$ satisfying $0<\alpha_{1}<\alpha_{2}<\cdots<\alpha_{J}=1$,
and $j$-ary vector of probabilities $\pi$. The resulting density is 
\[
f(x_{i};\theta_{i},\eta_{i})=\sum_{j=1}^{J}\pi_{j}\left(\int_{u\in(0,\alpha_{j}]}f^\star(x_{i};\theta_{i},\eta_{i}, u)d\omega(\alpha)\right)^{-1}f^\star(x_{i};\theta_{i},\eta_{i})1_{(0,\alpha_{j}]}(u).
\]
Using a reasoning entirely analogous to that of section \ref{subsect:publicationBias}, we define $\omega$ as
\[
\omega(u;\pi) = \pi_11_{[0,0.025]}(u) + \pi_{2}1_{(0,0.05]}(u) + \pi_{3}1_{(0,1]}(u),
\]
i.e., we only consider two splitting points at $0.025$ and $0.05$.

The density of an observation from a fixed effects one-sided normal discrete probability \textit{p}-hacking model is
\begin{eqnarray}
f(x_{i};\theta_{i},\sigma_{i}) & = & \sum_{j=1}^{J}\pi_{j}\phi_{[\Phi^{-1}(1-\alpha_{j}),\Phi^{-1}(1-\alpha_{j-1}))}(x_{i}\mid\theta_{i},\sigma_{i}),\label{eq:Fixed effects, p-hacking}
\end{eqnarray}
but there is no closed form for the density of its random effect version.

\section{The difference between the models}\label{sect:differences}

\subsection{Selection sets\label{sec:Selection Sets}}
There is a real but subtle difference between the publication bias model and the \textit{p}-hacking model. To properly understand this difference, let us introduce the idea of selection sets.

\begin{algorithm}[!h]
\begin{algorithmic}[1]
	\State $x^{0}\sim p(x)$.
	\For{$i$ in $i=0,1,\ldots$}
		\If{$s\mid x^i = 1$}         
			\State Report $x^i$.           
		\Else         
			\State $x_{H}^{i+1}\sim p(x_{H}^{i}\mid x_{H^{c}}^{0})$.        
		\EndIf  
	\EndFor  
\end{algorithmic}
\caption{\label{alg:Selection model}The selection model $q_{H}(x)$.}
\end{algorithm}

Let $X$ be a stochastic variable with density $p(x)$, such as a standardized effect size. Let the \emph{selection variable} $s$ be a binary stochastic variable that equals $1$ if and only if $X$ is observed. To understand the meaning of $s$, recall the publication bias scenario, where not all $X$s are observed, because they first have to be accepted by the editor. The variable $s$ equals $1$ if $X$ is accepted by the editor and $0$ otherwise. When $X$ is univariate, the density of our observed $X$ is $q(x)=p(s=1\mid x)/p(s=1)p(x)$. This is also known as a \emph{weighted distribution}, see e.g.\ \citep[][eq. 3.1]{rao1985weighted}. When $X$ is multivariate, we find ourselves in a slightly more difficult position. Now we have to state which variables to integrate over to recover the normalizing constant of $q(x)\propto p(s=1\mid x)p(x)$. Let us use the term \emph{selection set} for the set of variables to integrate over and denote the set of their indexes with capital letters, e.g., $H$. Making use of the notational convention $x_{H}=\left\{ x_{i},i\in H\right\}$, define the selection model based on $H$ as
\begin{equation}
q_{H}(x)=\frac{p(s=1\mid x)}{p(s=1\mid x_{H^{c}})}p(x)\label{eq:H-selection model},
\end{equation}
where $H^c$ is the complement of $H$. This model can be be viewed as a rejection sampling model \citep{von1951various}. To understand how it works, take a look at the pseudo-code in algorithm \ref{alg:Selection model}: $H$ is the set of every variable that is sampled together until $s=1$.

\begin{prop}\label{prop:is density}
The function $q_{H}(x)$ is a density for any $H$.
\end{prop}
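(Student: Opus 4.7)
The plan is to verify the two defining properties of a density: non-negativity and unit integral. Non-negativity is immediate, since $p(s=1\mid x)$, $p(s=1\mid x_{H^c})$, and $p(x)$ are all non-negative and the denominator is strictly positive wherever we evaluate $q_H$ (points with $p(s=1\mid x_{H^c})=0$ can be ignored as they form a null set under $p$). The real content is showing $\int q_H(x)\,dx=1$.

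For that, I would split the integration variable as $x=(x_H,x_{H^c})$ and use Fubini to integrate over $x_H$ first, pulling the $x_{H^c}$-measurable factor $1/p(s=1\mid x_{H^c})$ outside the inner integral. After factoring the joint density as $p(x)=p(x_H\mid x_{H^c})p(x_{H^c})$, the inner integral becomes
\begin{equation*}
\int p(s=1\mid x_H,x_{H^c})\,p(x_H\mid x_{H^c})\,dx_H \;=\; p(s=1\mid x_{H^c}),
\end{equation*}
by the law of total probability (marginalizing $x_H$ out of the conditional distribution of $(s,x_H)$ given $x_{H^c}$). This cancels the factor $1/p(s=1\mid x_{H^c})$ exactly, leaving $\int p(x_{H^c})\,dx_{H^c}=1$.

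The only subtlety, and the one place I would be careful, is the handling of the set where $p(s=1\mid x_{H^c})=0$. Under $p$, this set has probability zero, because $\int p(s=1\mid x_{H^c})\,p(x_{H^c})\,dx_{H^c}=p(s=1)$ and the conditional density is positive $p$-a.e.\ on the support; defining $q_H$ to be zero on this exceptional set does not affect the integral. No other step requires more than Fubini and the tower property, so the proof is essentially a one-line computation once the conditional factorization is set up correctly.
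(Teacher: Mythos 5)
Your proof is correct and follows essentially the same route as the paper's: factor $p(x)$ over the split $x=(x_H,x_{H^c})$, integrate out $x_H$ first so that the law of total probability produces $p(s=1\mid x_{H^c})$ and cancels the denominator, then integrate the remaining marginal to $1$. The extra remarks on non-negativity and the null set where the denominator vanishes go slightly beyond what the paper records, but they do not change the argument.
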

See the appendix for a proof of Proposition \ref{prop:is density}. When $H$ contains every $x_i$, $q_H$ is the simplest kind of selection model, where every variable is sampled together until $s=1$. When $H$ is empty, no variables can be resampled, and the model reduces to $p(x)$. But for non-empty $H$, $q_{H}(x)$ will often be equal to neither $p(x)$ nor $q_{H\cup H^c}(x)$, and different choices of selection sets $H\neq G$ will usually lead to different models $q_{H}(x)\neq q_{G}(x)$. 
\begin{prop}
\label{prop:Equal selection models}Two selection models based on the same $p(x)$ and $s$ are equal, i.e. $q_{H}(x)=q_{G}(x)$, if and only if $p(s=1\mid x_{H^{c}})=p(s=1\mid x_{G^{c}})$.
In particular, $q_{H}(x)=p(x)$ if and only if $p(s=1\mid x_{H^{c}})=p(s=1\mid x)$.
\end{prop}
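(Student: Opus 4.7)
The plan is to prove both equivalences by direct algebraic manipulation of the defining formula $q_H(x) = \tfrac{p(s=1\mid x)}{p(s=1\mid x_{H^c})}p(x)$, being mindful that densities are only determined up to a set of $p$-measure zero and that conditional probabilities are only determined up to a marginal null set.

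For the reverse implication, I would simply substitute the hypothesis $p(s=1\mid x_{H^c}) = p(s=1\mid x_{G^c})$ into the two defining expressions; the denominators in $q_H(x)$ and $q_G(x)$ coincide and the numerators are identical, so $q_H(x)=q_G(x)$ pointwise wherever both are defined. For the forward implication, I would start from $q_H(x)=q_G(x)$ (almost everywhere) and restrict attention to the set $A=\{x : p(x)>0,\ p(s=1\mid x)>0\}$. On $A$ the common factor $p(s=1\mid x)\,p(x)$ is strictly positive and can be cancelled, yielding $p(s=1\mid x_{H^c})=p(s=1\mid x_{G^c})$ on $A$. Off $A$ either $p(x)=0$, in which case the conditional $p(s=1\mid x_{H^c})$ is determined only on a null set and the statement is vacuous, or $p(s=1\mid x)=0$, in which case both selection models vanish and impose no constraint; hence the conclusion holds everywhere it is meaningful.

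The particular case follows by specialising to $G=\emptyset$, so that $G^c$ is the full index set, $x_{G^c}=x$, and the defining formula gives $q_G(x)=\tfrac{p(s=1\mid x)}{p(s=1\mid x)}p(x)=p(x)$. The main equivalence then reads $q_H(x)=p(x)$ iff $p(s=1\mid x_{H^c})=p(s=1\mid x)$, which is exactly the claimed statement.

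The only real obstacle is notational and measure-theoretic: one must distinguish between the conditional $p(s=1\mid x_{H^c})$ as a function of the marginal argument $x_{H^c}$ and its evaluation along a realisation $x$, and acknowledge that the equality $q_H=q_G$ of densities is an almost-everywhere statement. Neither difficulty is deep, and the core argument is one line of cancellation, so I expect the full proof to be short and to be deferred to the appendix.
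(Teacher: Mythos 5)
Your proposal is correct and follows essentially the same route as the paper, which simply observes that both claims follow directly from the defining formula \eqref{eq:H-selection model}: the reverse direction is substitution, the forward direction is cancellation of the common factor $p(s=1\mid x)p(x)$, and the special case is the choice $G=\emptyset$. The additional measure-theoretic caveats you raise are reasonable but not part of the paper's (one-line) argument.
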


\begin{proof}
Both results follow directly from Equation \eqref{eq:H-selection model}.
\end{proof}

It is handy to visualize selection models and their selection sets using directed acyclic graphs. To this end recall that a Bayesian network is a directed acyclic graph $G$ together with a probability
density $p$ satisfying the property that $$p(x)=\prod_{v\in V(G)}p(x_{v}\mid x_{\textrm{pa}(v)}),$$ where $V(G)$ is the set of vertices in $G$ and $x_{\textrm{pa}(v)}$ are the parents of $x_{v}$ in $G$ \citep{Pearl2014}.
Transforming a Bayesian network for $p$ into a Bayesian network for $q_{H}$ is easy, just add the following to $G$: (i) The selection variable vertex $s$, and (ii) arrows $x$ to $s$ for each $x$ that $s$ depends on. Then
\begin{equation}
q_{H}(x\mid s=1)=\frac{p(s=1\mid x_{\textrm{pa}(s)})}{p(s=1\mid H^{c})}\prod_{v\in V(G)}p(x_{v}\mid x_{\textrm{pa}(v)})\label{eq:DAG, selection model}.
\end{equation}

To visualize the selection set $H$, start by drawing a dashed plate around the vertices in $H$. In plate notation \citep{buntine1994operations}, a solid plate represents variables that are sampled together. The
dashed plate does almost the same, for recall that $H$ contains all the elements that are sampled together until $s=1$. The semantic difference between a dashed and a solid plate is that every sample in a solid plate is observed, but only one of potentially many samples in a dashed plate is observed. The following bare-bones example should make things clear.
\begin{example}\label{exa:Marginal density of theta}
Let $p(x,\theta)=p(x\mid\theta)p(\theta)$
be a density and $s$ be a function of $x$ only, so that $p(s=1\mid x,\theta)=p(s=1\mid x)$.
The possible selection models are
\begin{eqnarray*}
q_{\emptyset}(x,\theta) = q_{\theta}(x,\theta) & = & p(x,\theta)\\
q_{(x,\theta)}(x,\theta) & = & \frac{p(s=1\mid x)}{p(s=1)}p(x,\theta)\\
q_{x}(x,\theta) & = & \frac{p(s=1\mid x)}{p(s=1\mid\theta)}p(x,\theta)
\end{eqnarray*}
Figure \ref{fig:Plate notation, simple example} displays the directed acyclic graphs of $p$ and the selection models when $H = \emptyset$, $H=\left\{ x,\theta\right\}$, and $H=\left\{ x\right\}$, respectively. The marginal distribution of $\theta$ is not the same for $H=\left\{ x\right\}$ and $H=\left\{ x,\theta\right\} $, as
\begin{eqnarray*}
q_{\left\{ x,\theta\right\} }(\theta) & = & p(\theta)\frac{p(s=1\mid\theta)}{p(s=1)}\\
q_{x}(\theta) & = & \int\frac{p(s=1\mid x)}{p(s=1\mid\theta)}p(x,\theta)dx=p(\theta),
\end{eqnarray*}
i.e., it is affected by the selection mechanism $s$.
\end{example}

\begin{figure}
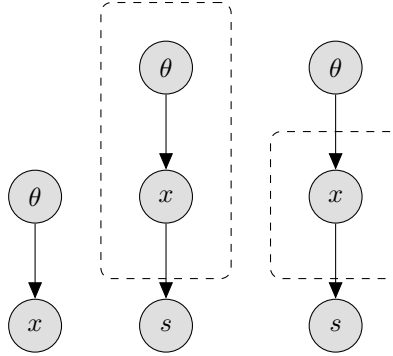

\begin{center}     
 \begin{tabular}{ccc}   
  \import{figures/}{simple_example_0} &   
   \import{figures/}{simple_example_1} & 
   \import{figures/}{simple_example_2}
 \end{tabular} 
\end{center}

\caption{\label{fig:Plate notation, simple example} Three simple selection models. {\bf (left)} the original $p(x,\theta)$; {\bf (middle)} the model $q_{\left\{ x,\theta\right\} }$, where $\theta$ and $x$ are
sampled together until $s=1$; {\bf (right)} the model $q_{x}$, where only $x$ is sampled until $s=1$.}
\end{figure}

Let $f(x,\theta\mid\eta)=f(x\mid\theta,\eta)p(\theta)$
be the joint density of a random effects meta-analysis, where $x$ is the effect size, $\theta$ is the study-specific parameter of interest, and $\eta$ is a study-specific nuisance parameter such as the sample size of the study. The left plot of Figure \ref{fig:Plate notation, simple example} is a visualization of $f(x,\theta)$. If we have more than one study to analyse, we will have to work with product density $\prod_{i=1}^{n}f(x_{i},\theta_{i}\mid\eta_{i})$ instead of the stand-alone density $f(x,\theta\mid\eta)$. This is visualised in the middle plot of Figure \ref{fig:Plate notation, simple example}  by drawing a solid plate around the pair $(x,\theta)$. When we are dealing with a fixed effects meta-analysis, in which $\theta$ is fixed, the plate should be drawn around $x$ only (Figure \ref{fig:Plate notation, simple example}, right graph).

\subsection{Publication bias and \textit{p}-hacking models\label{subsec:Selection sets, meta analysis}}

To visualize selection models based on \textit{p}-values, we must make some modifications to the original graph:

\begin{enumerate}[label=\roman*]
\item Add the \textit{p}-value node $u$.
\item Add an arrow from $x$ to $u$. 
\item Since the \textit{p}-value $u$ usually depends on more information than just $x$, such as the standard deviation of $x$, add an arrow from $\eta$ (which represents the extra information) to $u$ as well.
\item Add the selection node $s$ and an arrow from $u$ to $s$. If $u$ is the only parent of $s$, we are dealing with selection models only based on
\textit{p}-values. 
\end{enumerate}

The placement of dashed and solid plates depends on which model we want to use.

The idea behind the publication bias model is that a completely new study is done whenever the last one failed to be published. This implies that $\theta$ and $x$ are sampled together. The left plot of Figure \ref{fig:Plate notation, publication bias and p-hacking} shows the direct acyclic graph of the normal publication bias model defined in Proposition \ref{prop:One-sided normal discrete probability vector publication bias model-1}. In this particular case, $\eta$ corresponds to $\sigma$, the standard deviation. Moreover, $u$ is a \textit{p}-value, $\theta_{0}$ is the mean of the effect size distribution, $\tau$ is the standard deviation of the effect size distribution, and $\rho$ is the selection probability function. The variable $Z$ lives on the unit interval, and encodes the editor's decision to publish: If the observed \textit{p}-value is less than $Z$, the study is published. Importantly, $Z$ is placed inside the selection set because a new \textit{p}-value cut-off decision is made for each study received. Since $x$ and $\theta$ are sampled together, the selection mechanism modifies $p(\theta)$.

In the \textit{p}-hacking scenario, the \textit{p}-hacker will hack his study all the way to significance, regardless of $\theta$. This means that $\theta$ and $x$ are sampled separately and $\theta$ must be placed outside the selection set. Moreover, the decision of how much to \textit{p}-hack is not reevaluated at each attempt. Consequently, the random variable that controls the \textit{p}-hacking decisions, analogously to the publication bias model, $Z$, is also placed outside the selection graph. This is the case, for example, of an author who decides to \textit{p}-hack to level $\alpha$ ($Z = \alpha$): he acts on $x$ to obtain the desired \textit{p}-value, whatever the sampled $\theta$ is. The graphical representation of this model is shown in the right plot of Figure \ref{fig:Plate notation, publication bias and p-hacking}. Since $x$ and $\theta$ are not sampled together, the selection mechanism does not modify $p(\theta)$.

\begin{figure}
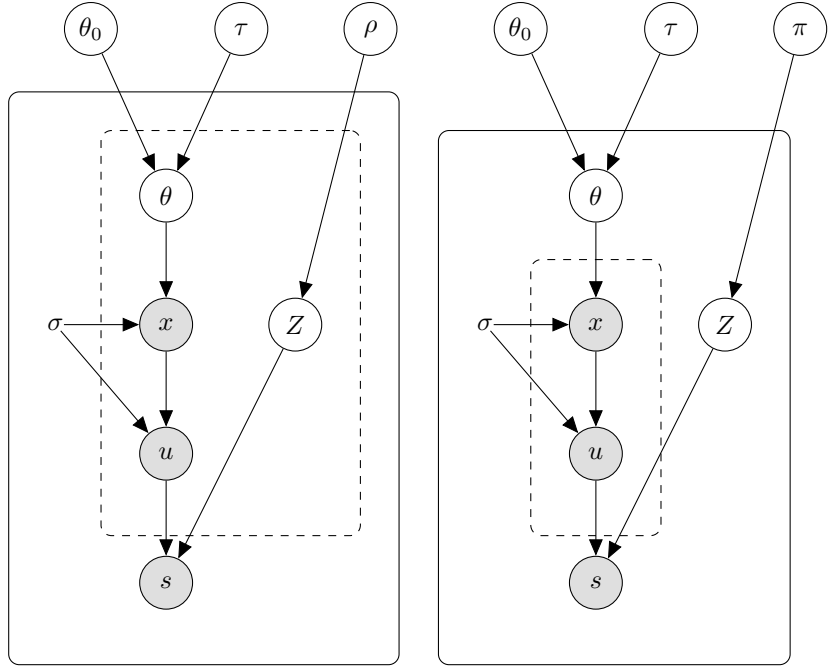

\begin{center}     
 \begin{tabular}{cc}    
  \import{figures/}{publication_bias_type_i} &     
  \import{figures/}{phacking_type_ii}
 \end{tabular} 
\end{center}
\caption{\label{fig:Plate notation, publication bias and p-hacking} Directed acyclic graphs for: {\bf (left)}
the publication bias model; {\bf (right)} the \textit{p}-hacking model. The dashed plates enclose the selection sets and the the solid plates enclose variables that are repeated together.}
\end{figure}

\subsection{Equivalence in special cases}
The publication bias model defined in Proposition \ref{prop:One-sided normal discrete probability vector publication bias model-1} and the \textit{p}-hacking model are equivalent when $\sigma_{i}$ is fixed across studies. This holds both for the fixed and random effects models. To see this, let $\pi$ be any probability vector for the \textit{p}-hacking model and solve the invertible linear system $\pi^{\star}(\rho)=\pi$ for $\rho$. There is no guarantee for the models to be equivalent when $\sigma_{i}$ is not fixed, see the appendix.

\section{Simulations}\label{sect:simulations}

We want to answer these three questions about the \textit{p}-hacking and publication bias models:
\begin{enumerate}[label=\roman*]
\item Do they work even in the absence of \textit{p}-hacking and publication bias? Although we know these phenomena are ubiquitous and should always be corrected for, it is still important that the models do not distort the results when there is no publication bias or \textit{p}-hacking.
\item How do they behave in extreme situations? In particular, we are interested in how the models behave when $n$ is small and the heterogeneity is large.
\item Are the models distinguishable in practice? Does the \textit{p}-hacking model work under the publication bias scenario and vice versa?
\end{enumerate}

\subsection{Settings}
We generate data under three scenarios: (i) With no publication bias nor \textit{p}-hacking, using the normal random effect meta-analysis model. (ii) Under the presence of publication bias, using model \eqref{eq:Random effects, publication bias}. (iii) Under presence of \textit{p}-hacking, using the random effects normal \textit{p}-hacking model. The study-specific variances $\sigma_{i}^{2}$ are sampled uniformly from $\left\{ 20,\ldots80\right\} $. The size of the meta-analyses are $n = 5, 30, 100$, corresponding to small, medium and large meta-analyses, while the means for the effect size distribution are $0, 0.2, 0.8$. The value $\theta_0 = 0$ corresponds to no expected effect, while the positive $\theta_0$s are the cut-off for small and large effect sizes of \citet[][pages 24 -- 27]{cohen1988statistical}. The standard deviations of the random effects distributions are $\tau=0.1$ and $\tau=0.5$. While $\tau = 0.1$ is a reasonable amount of heterogeneity, $\tau=0.5$ is a large amount of heterogeneity that provides a challenge for the models. The probability of acceptance of a paper are simulated to be $1$ if the \textit{p}-value is between $0$ and $0.025$, $0.7$ if the \textit{p}-value is between $0.025$ and $0.05$, and $0.1$ otherwise. For the same intervals, the \textit{p}-hacking probabilities are $0.6$, $0.3$ and $0.1$. 

For each parameter combination we estimate the \textit{p}-hacking model and the publication bias model. Both models have normal likelihoods and normal effect size distributions. We use one-sided significance cut-offs at $0.025$ and $0.05$ for both the publication bias and the \textit{p}-hacking models. We use standard Gaussian priors for $\theta_0$, a standard half normal prior for $\tau$, and, in the \textit{p}-hacking model, a uniform Dirichlet prior for $\pi$. For the $\rho$ in the publication bias model we use a a uniform Dirichlet that constrains $\rho_{1}\geq\rho_{2}\geq\ldots\geq\rho_{j}$. That is, the publication probability is a decreasing function of the \textit{p}-value.

All of these priors are reasonable. A standard normal for $\theta_0$ is reasonable because we know that $\theta_0$ has a small magnitude in pretty much any meta-analysis, and most are clustered around $0$. A half normal prior for $\tau$ is also reasonable, as $\tau$ is much more likely to be very small than very big. The priors for $\rho$ and $\pi$ are harder to reason about, but a uniform Dirichlet seems like a natural and neutral choice. These are the standard prior of the $\mathtt{R}$ package $\mathtt{publipha}$ \citep{publipha}, which we used for all computations. $\mathtt{publipha}$ uses $\mathtt{STAN}$ \citep{Carpenter2017-cf} to estimate the models, and each estimation uses $8$ chains.

The number of simulations is $N = 100$ for each parameter combination. See the OSF repository for this paper (\url{https://osf.io/tx8qn/}) for the code used to run the simulations and its raw data.

\subsection{Results}
\paragraph{No publication bias, no \textit{p}-hacking.} The results under this scenario are reported in Table \ref{tab:Simulation_classical}. When the amount of heterogeneity is reasonable ($\tau = 0.1$) both the \textit{p}-hacking and the publication bias perform well. The publication bias model performs slightly worse than the \textit{p}-hacking model when the mean effect size is large ($\theta_0 = 0.8$) and the number of studies small ($n=5$), but it catches up as $n$ increases. The situation with $\tau = 0.5$ is more interesting. Here the \textit{p}-hacking model outperforms the publication bias model, with the latter tending to underestimate the mean effect. While increasing $n$ alleviates the problem, there is still a substantial underestimation of $\theta_0$ even in the case of $n = 100$. In contrast, both models seems to estimate $\tau$ pretty well. As a take home message, it is safe to use the \textit{p}-hacking model when there is no \textit{p}-hacking or publication bias, but less safe to use the publication bias model.

\begin{table}
\noindent
\caption{\label{tab:Simulation_classical} {\bf No publication bias, no \textit{p}-hacking.} Posterior means and standard deviations from the \textit{p}-hacking and publication bias models when the data are simulated from the normal random effects meta-analysis model.}
\begin{center}
\begin{tabular}{llllrrrrrrrc}
\multicolumn{3}{r}{\textbf{True values}} &  & \multicolumn{3}{c}{\textbf{\textit{p}-hacking model}} &  & \multicolumn{3}{c}{\textbf{Publication bias model}} & \tabularnewline
$\tau$ & $\theta_0$ & $n$ &  & \multicolumn{1}{c}{$\widehat{\theta_0}$} &  & \multicolumn{1}{c}{$\widehat{\tau}$} &  & \multicolumn{1}{c}{$\widehat{\theta_0}$} &  & \multicolumn{1}{c}{$\widehat{\tau}$} & \tabularnewline
\hline
\multirow{9}{*}{$0.1$} & \multirow{3}{*}{$0$} & $5$ &  & -0.03 (0.09) &  & 0.18 (0.07) &  & -0.06 (0.08) &  & 0.13 (0.06) & \tabularnewline
 &  & $30$ &  & -0.01 (0.03) &  & 0.08 (0.03) &  & -0.02 (0.03) &  & 0.07 (0.03) & \tabularnewline
 &  & $100$ &  & -0.01 (0.02) &  & 0.08 (0.03) &  & -0.01 (0.02) &  & 0.07 (0.02) & \tabularnewline
 \cdashline{3-11}
 & \multirow{3}{*}{$0.2$} & $5$ &  &  0.12 (0.08) &  & 0.21 (0.08) &  &  0.09 (0.07) &  & 0.17 (0.08) & \tabularnewline
 &  & $30$ &  &  0.17 (0.04) &  & 0.09 (0.04) &  &  0.15 (0.03) &  & 0.09 (0.04) & \tabularnewline
 &  & $100$ &  &  0.18 (0.02) &  & 0.09 (0.03) &  &  0.17 (0.02) &  & 0.09 (0.03) & \tabularnewline
 \cdashline{3-11}
 & \multirow{3}{*}{$0.8$} & $5$ &  &  0.78 (0.08) &  & 0.21 (0.10) &  &  0.63 (0.15) &  & 0.34 (0.14) & \tabularnewline
 &  & $30$ &  &  0.80 (0.04) &  & 0.11 (0.04) &  &  0.80 (0.04) &  & 0.11 (0.04) & \tabularnewline
 &  & $100$ &  &  0.80 (0.02) &  & 0.10 (0.03) &  &  0.80 (0.02) &  & 0.10 (0.03) & \tabularnewline
 \cline{2-12}
\multirow{9}{*}{$0.5$} & \multirow{3}{*}{$0$} & $5$ &  & -0.03 (0.20) &  & 0.59 (0.21) &  & -0.21 (0.17) &  & 0.53 (0.21) & \tabularnewline
 &  & $30$ &  & -0.03 (0.09) &  & 0.51 (0.08) &  & -0.14 (0.09) &  & 0.47 (0.08) & \tabularnewline
 &  & $100$ &  & -0.02 (0.05) &  & 0.50 (0.04) &  & -0.08 (0.06) &  & 0.48 (0.04) & \tabularnewline
 \cdashline{3-11}
 & \multirow{3}{*}{$0.2$} & $5$ &  &  0.10 (0.22) &  & 0.57 (0.20) &  & -0.09 (0.19) &  & 0.54 (0.19) & \tabularnewline
 &  & $30$ &  &  0.15 (0.10) &  & 0.53 (0.08) &  &  0.02 (0.10) &  & 0.51 (0.08) & \tabularnewline
 &  & $100$ &  &  0.19 (0.05) &  & 0.51 (0.04) &  &  0.11 (0.06) &  & 0.49 (0.04) & \tabularnewline
 \cdashline{3-11}
 & \multirow{3}{*}{$0.8$} & $5$ &  &  0.68 (0.23) &  & 0.62 (0.21) &  &  0.35 (0.23) &  & 0.74 (0.21) & \tabularnewline
 &  & $30$ &  &  0.78 (0.10) &  & 0.52 (0.08) &  &  0.60 (0.14) &  & 0.60 (0.08) & \tabularnewline
 &  & $100$ &  &  0.79 (0.05) &  & 0.51 (0.04) &  &  0.70 (0.07) &  & 0.55 (0.04) & \tabularnewline
 \hline
\end{tabular}
\end{center}
\end{table}

\paragraph{Publication bias.} 
Overall, the publication bias model outperforms the \textit{p}-hacking model when the data are generated from the publication bias model, but not by much, see Table \ref{tab:Simulation_pb}). When $\tau = 0.5$ the \textit{p}-hacking model tends to overestimates $\theta_0$ while the publication bias model tends to underestimate it, and the overestimation of the \textit{p}-hacking model is most extreme when $\theta_0 = 0.2$. When $\tau = 0.1$, the models produce almost indistinguishable results. In the most challenging case of $n=5$, the \textit{p}-hacking model is never worse than the publication bias model and is closer to the truth when $\theta_0 = 0.8$. Just as in the \textit{p}-hacking scenario, both models estimate $\tau$ reasonably well.

\begin{table}
\noindent
\caption{\label{tab:Simulation_pb} {\bf Publication bias.} Posterior means and standard deviations from the \textit{p}-hacking and publication bias models when the data are simulated from the publication bias model with cut-offs at $0.025$ and $0.05$, with selection probabilities equal to $1$, $0.7$, and $0.1$ in the intervals $[0, 0.025)$, $[0.025, 0.05)$, and $[0.5, 1]$.}
\begin{center}
\begin{tabular}{llllrrrrrrrc}
\multicolumn{3}{r}{\textbf{True values}} &  & \multicolumn{3}{c}{\textbf{\textit{p}-hacking model}} &  & \multicolumn{3}{c}{\textbf{Publication bias model}} & \tabularnewline
$\tau$ & $\theta_0$ & $n$ &  & $\widehat{\theta_0}$ &  & $\widehat{\tau}$ &  & $\widehat{\theta_0}$ &  & $\widehat{\tau}$ & \tabularnewline
\hline
\multirow{9}{*}{$0.1$} & \multirow{3}{*}{$0$} & $5$ &  & -0.01 (0.10) &  & 0.23 (0.08) &  & -0.01 (0.07) &  & 0.18 (0.07) & \tabularnewline
 &  & $30$ &  &  0.02 (0.04) &  & 0.12 (0.05) &  &  0.01 (0.04) &  & 0.10 (0.04) & \tabularnewline
 &  & $100$ &  &  0.02 (0.03) &  & 0.12 (0.03) &  &  0.00 (0.02) &  & 0.10 (0.03) & \tabularnewline
 \cdashline{3-11}
 & \multirow{3}{*}{$0.2$} & $5$ &  &  0.10 (0.15) &  & 0.30 (0.09) &  &  0.10 (0.07) &  & 0.21 (0.08) & \tabularnewline
 &  & $30$ &  &  0.22 (0.05) &  & 0.11 (0.05) &  &  0.19 (0.05) &  & 0.09 (0.04) & \tabularnewline
 &  & $100$ &  &  0.23 (0.03) &  & 0.10 (0.04) &  &  0.20 (0.04) &  & 0.09 (0.03) & \tabularnewline
 \cdashline{3-11}
 & \multirow{3}{*}{$0.8$} & $5$ &  &  0.77 (0.08) &  & 0.20 (0.08) &  &  0.62 (0.14) &  & 0.32 (0.12) & \tabularnewline
 &  & $30$ &  &  0.80 (0.03) &  & 0.10 (0.04) &  &  0.79 (0.03) &  & 0.10 (0.04) & \tabularnewline
 &  & $100$ &  &  0.80 (0.02) &  & 0.10 (0.02) &  &  0.80 (0.02) &  & 0.10 (0.02) & \tabularnewline
 \cline{2-12}
 \multirow{9}{*}{$0.5$} & \multirow{3}{*}{$0$} & $5$ &  &  0.34 (0.21) &  & 0.53 (0.20) &  &  0.04 (0.22) &  & 0.56 (0.18) & \tabularnewline
 &  & $30$ &  &  0.36 (0.10) &  & 0.48 (0.09) &  &  0.01 (0.19) &  & 0.50 (0.08) & \tabularnewline
 &  & $100$ &  &  0.36 (0.04) &  & 0.47 (0.04) &  & -0.01 (0.10) &  & 0.50 (0.04) & \tabularnewline
 \cdashline{3-11}
 & \multirow{3}{*}{$0.2$} & $5$ &  &  0.42 (0.21) &  & 0.54 (0.22) &  &  0.12 (0.22) &  & 0.59 (0.19) & \tabularnewline
 &  & $30$ &  &  0.50 (0.07) &  & 0.44 (0.08) &  &  0.16 (0.18) &  & 0.51 (0.09) & \tabularnewline
 &  & $100$ &  &  0.51 (0.04) &  & 0.42 (0.04) &  &  0.19 (0.10) &  & 0.50 (0.05) & \tabularnewline
 \cdashline{3-11}
 & \multirow{3}{*}{$0.8$} & $5$ &  &  0.81 (0.22) &  & 0.56 (0.19) &  &  0.47 (0.27) &  & 0.71 (0.20) & \tabularnewline
 &  & $30$ &  &  0.90 (0.09) &  & 0.45 (0.08) &  &  0.64 (0.21) &  & 0.58 (0.13) & \tabularnewline
 &  & $100$ &  &  0.90 (0.04) &  & 0.45 (0.04) &  &  0.74 (0.09) &  & 0.53 (0.06) & \tabularnewline
\hline
\end{tabular}
\end{center}
\end{table}

\paragraph{\textit{p}-hacking.} The simulation results for the \textit{p}-hacking model are in Table \ref{tab:Simulation_ph}. As before, the largest differences are in the most difficult case of $\tau = 0.5$ while the two models tend to agree in the more realistic case of $\tau = 0.1$. When $\tau = 0.5$ the publication bias model severely underestimates $\theta_0$, even getting the sign wrong in some instances. This should not come as a surprise given the interpretation of $\theta_0$ in the publication bias model, but demonstrates we should be cautious in interpreting the $\theta_0$ estimates. 

\begin{table}
\caption{\label{tab:Simulation_ph} {\bf \textit{p}-hacking.} Posterior means and standard deviations from the \textit{p}-hacking and publication bias models when the data are simulated  from the \textit{p}-hacking model with cut-offs at $0.025$ and $0.05$, with \textit{p}-hacking probabilities equal to $0.6$, $0.3$, and $0.1$ in the intervals $[0, 0.025)$, $[0.025, 0.05)$, and $[0.5, 1]$}
\begin{center}
\begin{tabular}{llllrrrrrrrc}
\multicolumn{3}{r}{\textbf{True values}} &  & \multicolumn{3}{c}{\textbf{\textit{p}-hacking model}} &  & \multicolumn{3}{c}{\textbf{Publication bias model}} & \tabularnewline
$\tau$ & $\theta_0$ & $n$ &  & $\widehat{\theta_0}$ &  & $\widehat{\tau}$ &  & $\widehat{\theta_0}$ &  & $\widehat{\tau}$ & \tabularnewline
 \hline
 \multirow{9}{*}{$0.1$} & \multirow{3}{*}{$0$} & $5$ &  & -0.06 (0.14) &  & 0.29 (0.07) &  &   0.04 (0.06) &  & 0.17 (0.05) & \tabularnewline
 &  & $30$ &  & -0.02 (0.08) &  & 0.13 (0.05) &  &   0.01 (0.07) &  & 0.07 (0.03) & \tabularnewline
 &  & $100$ &  &  0.00 (0.05) &  & 0.10 (0.04) &  &   0.00 (0.05) &  & 0.05 (0.02) & \tabularnewline
 \cdashline{3-11}
 & \multirow{3}{*}{$0.2$} & $5$ &  &  0.12 (0.16) &  & 0.29 (0.09) &  &   0.10 (0.06) &  & 0.21 (0.06) & \tabularnewline
 &  & $30$ &  &  0.18 (0.06) &  & 0.12 (0.05) &  &   0.15 (0.06) &  & 0.09 (0.03) & \tabularnewline
 &  & $100$ &  &  0.20 (0.04) &  & 0.09 (0.04) &  &   0.17 (0.05) &  & 0.08 (0.03) & \tabularnewline
 \cdashline{3-11}
 & \multirow{3}{*}{$0.8$} & $5$ &  &  0.79 (0.08) &  & 0.18 (0.09) &  &   0.65 (0.14) &  & 0.30 (0.13) & \tabularnewline
 &  & $30$ &  &  0.80 (0.03) &  & 0.10 (0.04) &  &   0.79 (0.03) &  & 0.10 (0.04) & \tabularnewline
 &  & $100$ &  &  0.80 (0.02) &  & 0.10 (0.02) &  &   0.80 (0.02) &  & 0.10 (0.02) & \tabularnewline
 \cline{2-12}
 \multirow{9}{*}{$0.5$} & \multirow{3}{*}{$0$} & $5$ &  &  0.08 (0.22) &  & 0.47 (0.19) &  &   0.01 (0.12) &  & 0.37 (0.19) & \tabularnewline
 &  & $30$ &  &  0.08 (0.09) &  & 0.43 (0.08) &  &  -0.24 (0.19) &  & 0.35 (0.10) & \tabularnewline
 &  & $100$ &  &  0.07 (0.06) &  & 0.44 (0.04) &  &  -0.33 (0.14) &  & 0.37 (0.06) & \tabularnewline
 \cdashline{3-11}
 & \multirow{3}{*}{$0.2$} & $5$ &  &  0.19 (0.24) &  & 0.50 (0.20) &  &   0.05 (0.13) &  & 0.42 (0.22) & \tabularnewline
 &  & $30$ &  &  0.24 (0.09) &  & 0.47 (0.08) &  &  -0.20 (0.19) &  & 0.46 (0.09) & \tabularnewline
 &  & $100$ &  &  0.23 (0.05) &  & 0.47 (0.04) &  &  -0.27 (0.16) &  & 0.47 (0.06) & \tabularnewline
 \cdashline{3-11}
 & \multirow{3}{*}{$0.8$} & $5$ &  &  0.72 (0.19) &  & 0.60 (0.19) &  &   0.35 (0.20) &  & 0.73 (0.19) & \tabularnewline
 &  & $30$ &  &  0.78 (0.09) &  & 0.52 (0.07) &  &   0.36 (0.23) &  & 0.67 (0.11) & \tabularnewline
 &  & $100$ &  &  0.80 (0.05) &  & 0.50 (0.04) &  &   0.42 (0.20) &  & 0.65 (0.09) & \tabularnewline
\hline
\end{tabular}
\end{center}
\end{table}

\section{Examples}\label{sect:examples}
In this section we apply the models on the two meta-analyses of \citet{cuddy2018p} and \citet{anderson2010violent}. As in the simulation study, we use normal models for each effect size with one-sided significance cut-off at $0.025$ and $0.05$ for both models. We use the same priors as we did in the simulation study. To compare the fit of the models we use the leave-one-out cross-validation information criterion (\textsc{LOOIC}) \citep{loo_article}, calculated using the R \citep{R} package $\mathtt{loo}$ \citep{loo}. LOOIC equals $-2\cdot\textsc{elpd}_{\textsc{loo}}$, where $\textsc{elpd}$ is the expected log pointwise predictive density for a new data set and $\textsc{elpd}_\textsc{loo}$ is an estimate of this quantity by leave-one-out cross validation. Just as the \textsc{AIC} \citep{akaike1998information}, smaller values indicate better model fit. As for the simulation study, the analyses have been done with the $\mathtt{R}$ package $\mathtt{publipha}$ \citep{publipha}, which in turn uses \texttt{STAN} \citep{Carpenter2017-cf}. Each model has been estimated with $8$ chains. See the OSF repository for this paper (\url{https://osf.io/tx8qn/}) for the code used to run the examples.

\subsection{Power posing\label{subsec:cuddy2018}}

\citet{cuddy2018p} conducted a meta-analysis of a of power posing, an alleged phenomenon where adopting expansive postures has positive psychological feedback effects. Their meta-analysis is not conventional, but a \textit{p}-curve analysis \citep{simonsohn2014p}. A \textit{p}-curve analysis is not based on estimated effect sizes and standard errors, but directly on \textit{p}-values. The data from \citep{cuddy2018p} can be accessed via the Open Science Framework (\url{https://osf.io/pfh6r/}). Here we only consider studies with outcome \enquote{mean difference}, design \enquote{2 cell}, and test statistic that is either $F$ or $t$. The $F$-statistics are all with $1$ denominator degree of freedom, and the root of these are distributed as the absolute value of a $t$-distributed variable. The $t$-values and the roots of the $F$-statistics are converted to standardized mean differences by using $d = t\sqrt{2/\nu}$, where $\nu$ is the degrees of freedom for the $t$-test. The standardized mean differences are to the left i Figure \ref{fig:cuddy2017}. Note the outlier $x_{12} = 1.72$. As it has a large effect on all the models, we analyze the data both with and without $x_{12}$.

\begin{figure}
\noindent \begin{centering}
\includegraphics[width=0.49\textwidth]{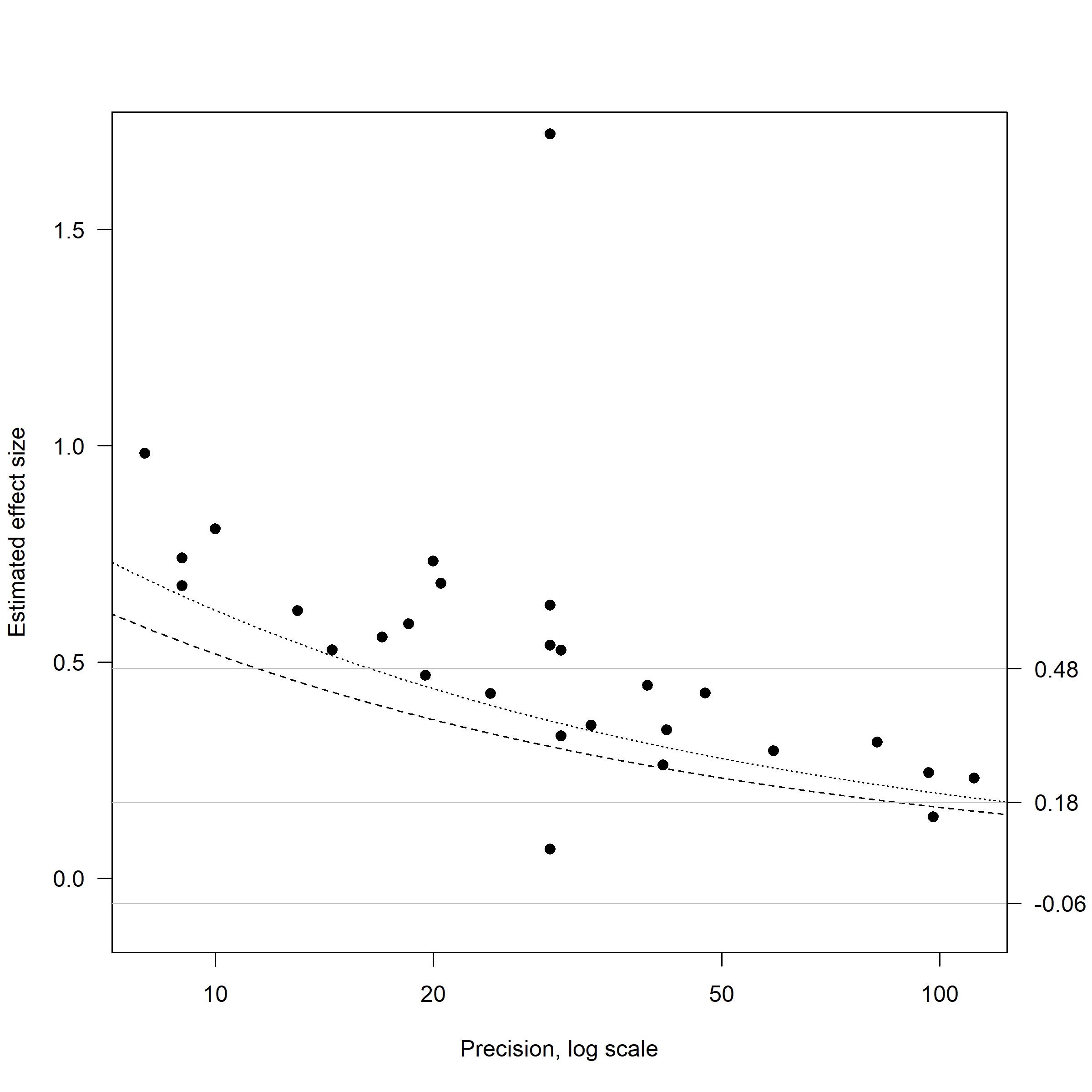}\includegraphics[width=0.49\textwidth]{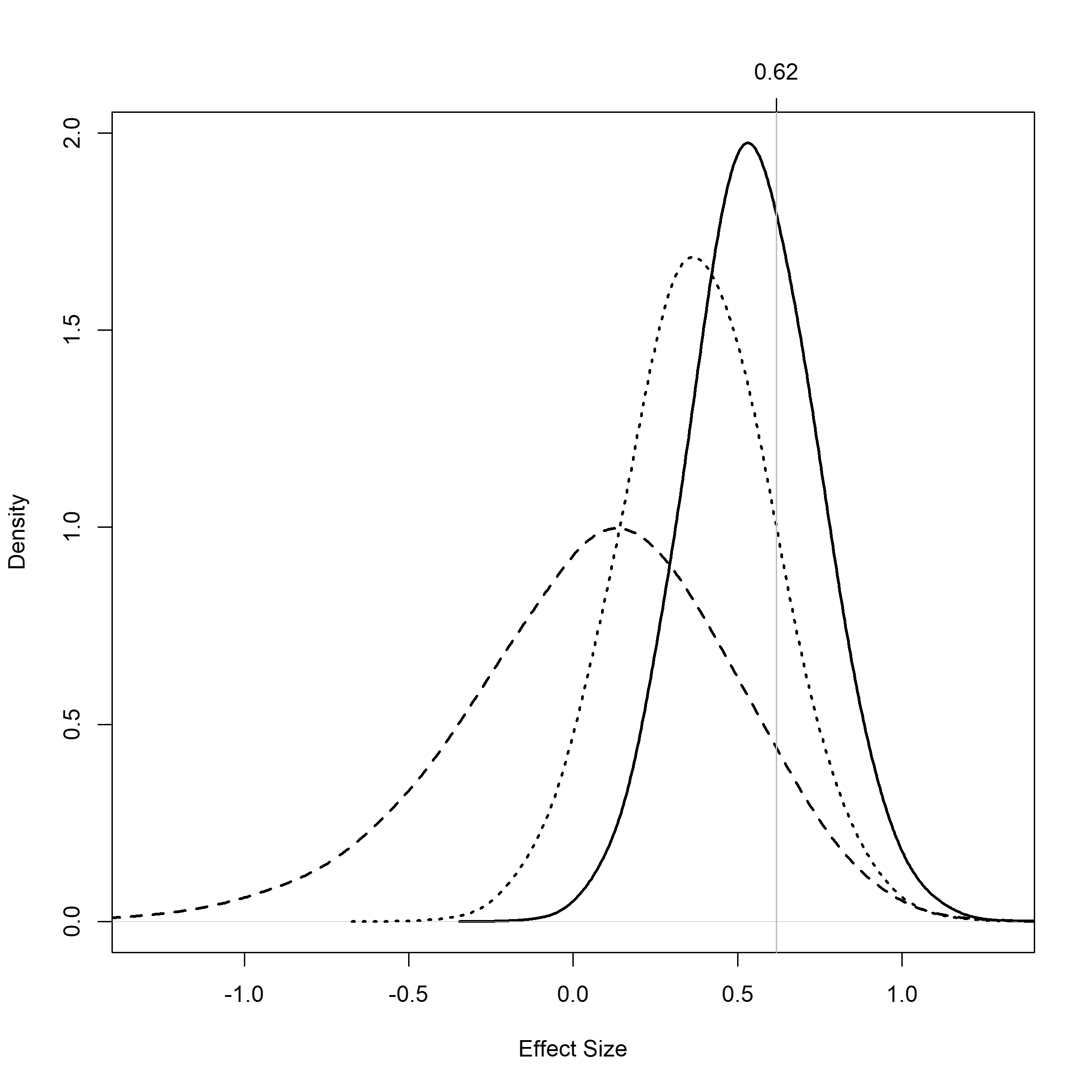}
\par\end{centering}
\caption{\label{fig:cuddy2017} \textbf{(left)} Effect sizes for the power posing example. The dotted black line is $1.96/\textrm{sd}$ and the dashed black line is $1.64/\textrm{sd}$. The ticks on the right hand side are
the meta-analytic means: $0.48$ is from the uncorrected model, $0.17$ is the mean of the selected effect size distribution under the \textit{p}-hacking model, while $-0.06$ is the mean under the publication bias model. \textbf{(right)} Posterior densities for $\theta_{2}$ in the power posing example. The dashed density belongs to the \textit{p}-hacking model, the dotted density to the publication bias model, and the solid density to the uncorrected model. The point $x_{2}=0.62$ is marked for reference.}
\end{figure}

The estimates of the \textit{p}-hacking model, the publication bias model, and the uncorrected meta-analysis models are in Table \ref{tab:cuddy2018}. According to the LOOIC the corrected models account much better for the data than the uncorrected model. Both the \textit{p}-hacking model and the publication bias models estimate
larger $\tau$s and smaller $\theta_{0}$s than the classical model, with the publication bias model estimating the surprising $\theta_{0}\approx0$. But recall the results of the simulation study, where the publication bias model severely underestimates $\theta_0$ when the \textit{p}-hacking model is true.

The publication bias selection affects not only the observed $x_{i}$s, but also the $\theta_{i}$s. As a consequence, the posterior mean of the selected effect size distribution (this equals $0.37$, is not shown in the table, and equals the average of the posterior means for the $\theta_{i}$s) is much closer to the uncorrected model's estimate than the \textit{p}-hacked estimate. This effect can be most easily understood by looking at a specific $\theta$, for example the $\theta_2$ reported in the right plot of Figure \ref{fig:cuddy2017}, where $x_{2}=0.62$. In this case, the publication bias posterior for is close to the uncorrected posterior even though $\theta_0 \approx 0$. On the other hand, the \textit{p}-hacking model pushes $0.62$ down to $0.17$, towards the meta-analytic mean of $.18$.

Finally, the surprisingly low value for $\theta_0$ obtained with the publication bias model can be a side effect of the presence of the outlier $x_{12} = 1.72$. Its presence on the right tail of an hypothetical true effect size distribution implies unobserved low and negative effects not reported due to publication bias. When the outlier is removed from the analysis, the estimate of $\theta_{0}$ goes up and agrees with the estimate from the \textit{p}-hacking model, which does not change. Once the outlier is removed, the fit of the publication bias model increases tremenduously, reaching a level close to that of the \textit{p}-hacking model. Moreover, the estimates of $\tau$ are strongly affected by the removal of $x_{12}$. In particular, the estimate of $\tau$ decreases from $0.45$ to $0.09$ in the \textit{p}-hacking model.

\begin{table}
\caption{\label{tab:cuddy2018} Power posing example: Posterior means for LOOICs and parameters (mean effect $\theta$, standard deviation $\tau$, probabilities of \emph{p}-hacking $\pi$/probabilities of being published $\rho$) of the \emph{p}-hacking, publication bias, and classical meta-analysis (uncorrected) model estimated on the data by \citet{cuddy2018p}. The results in the top table are obtained with all studies, those in the bottom without the outlier $x_{12}$. Posterior standard deviations are reported between brackets.}
\noindent
\begin{center}%
\begin{tabular}{lrrrrrrrrr}
\multicolumn{10}{c}{\bf All studies}\\
  & LOOIC & \multicolumn{1}{c}{$\theta_{0}$} & \multicolumn{1}{c}{$\tau$} & \multicolumn{1}{c}{$\pi_{1}/\rho_{1}$} & \multicolumn{1}{c}{$\pi_{2}/\rho_{2}$} \\ 
  \hline
uncorrected & 16 (18) & 0.48 (0.07) & 0.27 (0.06) &   &   \\ 
  \emph{p}-hacking & -18 (14) & 0.18 (0.12) & 0.45 (0.10) & 0.62 (0.15) & 0.23 (0.14) \\ 
  publication bias & -5.1 (22) & -0.06 (0.23) & 0.37 (0.09) & 0.39 (0.22) & 0.03 (0.03) \\ 
   \hline
\tabularnewline
\multicolumn{10}{c}{\bf Without outlier}\\
  & LOOIC & \multicolumn{1}{c}{$\theta_{0}$} & \multicolumn{1}{c}{$\tau$} & \multicolumn{1}{c}{$\pi_{1}/\rho_{1}$} & \multicolumn{1}{c}{$\pi_{2}/\rho_{2}$} \\ 
  \hline
uncorrected & -7.1 (5.7) & 0.39 (0.04) & 0.09 (0.05) &   &   \\ 
  \emph{p}-hacking & -38 (10) & 0.18 (0.07) & 0.09 (0.07) & 0.62 (0.15) & 0.24 (0.15) \\ 
  publication bias & -35 (11) & 0.16 (0.09) & 0.08 (0.06) & 0.26 (0.17) & 0.03 (0.03) \\ 
   \hline
\end{tabular}
\end{center}
\end{table}


In conclusion, the \textit{p}-hacking and publication bias models suggest there is selection bias in these studies. Both models have much better fit than the uncorrected one and it is reasonable to accept their parameter estimates as more realistic. Nontheless, both models agree on a value of $\theta_{0}$ that is likely to be different from $0$. The results of Table \ref{tab:cuddy2018} supports \citet{cuddy2018p}'s conclusion that there is evidence for some positive effect of power posing. The \textit{p}-hacking model does not suffer the presence of an outlier, and, in contrast to the publication bias model, provides similar results with and without $x_{12}$ in the data.

\subsection{Violent video games\label{subsec:Anderson}}

\citet{anderson2010violent} conducted a large meta-analysis on the effects of violent video games on seven negative outcomes such as aggressive behavior and aggressive cognition. As part of their analysis, they classified some experiments as best practice experiments \citep[for more details, see Table 2 of][]{anderson2010violent}. Suspecting publication bias, \citet{hilgard2017overstated} reanalysed the data using an array of tools to detect and adjust for publication bias. For the outcome variable aggressive cognition, \citet{hilgard2017overstated} noted that \enquote{Application of best-practices criteria seems to emphasize statistical significance, and a knot of experiments just reach statistical significance}. The data can be found on the web \citep{Hilgard2017} and are visualised to the left in Figure \ref{fig:anderson2010}. In the plot, the best practice experiments are represented by solid circles, all other experiments by hollow squares. An outlier $x=1.33$ has been removed from the data set, and excluded from our analyses. Its removal substantially improves the fit for all the models. 

In this example we fit the three models (\textit{p}-hacking, publication bias and uncorrected models) to three data subsets (all experiments, only best practice experiments, without best practice experiments). The outcome variable is aggressive behavior. Our the aim is to answer the following:
\begin{enumerate}
\item What are the parameter estimates, in each subset, for each model?
\item Which model has the best fit?
\item Do we have a reason to believe the best practice experiments are drawn from a different underlying distribution than the other experiments, as \citet{hilgard2017overstated} and the top left plot of Figure \ref{fig:anderson2010} suggest?
\item Is there a large difference between the posterior for $\theta_{0}$ and the mean posterior for the $\theta_{i}$s, as we saw in the previous example?
\end{enumerate}

\begin{table}
\noindent
\caption{\label{tab:Anderson2010}Violent video games example: Posterior means for LOOICs and parameters (mean effect $\theta$, standard deviation $\tau$, probabilities of \emph{p}-hacking $\pi$/probabilities of being published $\rho$) of the \emph{p}-hacking, publication bias, and classical meta-analysis (uncorrected) model estimated on the aggressive behavior data from \citet{anderson2010violent}. Posterior standard deviations are reported between brackets.}
\begin{center}%
\begin{tabular}{lcccccccccc}
\multicolumn{11}{c}{\bf All Experiments} \tabularnewline
  & LOOIC & \multicolumn{1}{c}{$\theta_{0}$} & \multicolumn{1}{c}{$\tau$} & \multicolumn{1}{c}{$\pi_{1}/\rho_{1}$} & \multicolumn{1}{c}{$\pi_{2}/\rho_{2}$} \\ 
  \hline
uncorrected & -38 (11) & 0.18 (0.02) & 0.04 (0.03) &   &   \\ 
  \emph{p}-hacking & -48 (13) & 0.09 (0.04) & 0.05 (0.04) & 0.25 (0.11) & 0.23 (0.11) \\ 
  publication bias & -54 (13) & 0.08 (0.03) & 0.03 (0.02) & 0.44 (0.18) & 0.13 (0.07) \\ 
   \hline
\tabularnewline
\multicolumn{11}{c}{\bf Only Best Practice Experiments} \tabularnewline
  & LOOIC & \multicolumn{1}{c}{$\theta_{0}$} & \multicolumn{1}{c}{$\tau$} & \multicolumn{1}{c}{$\pi_{1}/\rho_{1}$} & \multicolumn{1}{c}{$\pi_{2}/\rho_{2}$} \\ 
  \hline
uncorrected & -42 (6.2) & 0.22 (0.02) & 0.03 (0.02) &   &   \\ 
  \emph{p}-hacking & -59 (12) & 0.10 (0.05) & 0.06 (0.04) & 0.37 (0.17) & 0.41 (0.17) \\ 
  publication bias & -61 (11) & 0.11 (0.04) & 0.03 (0.02) & 0.46 (0.21) & 0.06 (0.05) \\ 
   \hline
\tabularnewline
\multicolumn{11}{c}{\bf Without Best Practice Experiments} \tabularnewline
  & LOOIC & \multicolumn{1}{c}{$\theta_{0}$} & \multicolumn{1}{c}{$\tau$} & \multicolumn{1}{c}{$\pi_{1}/\rho_{1}$} & \multicolumn{1}{c}{$\pi_{2}/\rho_{2}$} \\ 
  \hline
uncorrected & -7.4 (5.7) & 0.06 (0.04) & 0.08 (0.05) &   &   \\ 
  \emph{p}-hacking & -6.2 (5.1) & 0.01 (0.05) & 0.07 (0.05) & 0.10 (0.07) & 0.11 (0.08) \\ 
  publication bias & -7.7 (5) & 0.02 (0.04) & 0.06 (0.04) & 0.61 (0.23) & 0.35 (0.19) \\ 
   \hline
\end{tabular}
\end{center}
\end{table}


The first three questions can be answered by looking at Table \ref{tab:Anderson2010}. The estimates of $\theta_{0}$ are approximately the same for the publication bias and \textit{p}-hacking models, and roughly half of the uncorrected estimate in all cases. In particular, when all experiments or only the best experiments are considered, there is a noticeable difference. In these two cases, the LOOICs suggest that some \textit{p}-hacking or publication bias is present, as they are smaller than the LOOIC for the uncorrected models. Although the publication bias model seems to work slightly better than the \textit{p}-hacking model, we can state that the two models agree and we have little reason to prefer one to the other. Basically, we can interpret this as converging evidence that the parameter estimates obtained with these two models for $\theta_{0}$ and $\tau$ are in the ballpark of their true values.

Interestingly, when we exclude the experiments not considered best practice by \citet{anderson2010violent}, the differences between the estimates provided by the corrected and uncorrected models reduce and the LOOICs are almost the same. The question is if the differences between best practice and non-best practice studies reflect a different underlying distribution or not. To answer this question, let us take a look at the posterior densities for $\theta_{0}$ when all experiments are included, as reported in the top right plot of Figure \ref{fig:anderson2010}. In this case, the posterior distributions computed with the \textit{p}-hacking and publication bias models are similar (dashed and dotted lines, respectively), which strengthens the agreement seen in Table \ref{tab:Anderson2010}. There is no large difference between the posterior for $\theta_{0}$ and the mean posterior for the $\theta_{i}$s as in the previous example. The answer to question (4) is therefore no.

Back to question (3). We have good reasons to believe the best practice experiments have been drawn from a different underlying distribution than the other experiments if there is negligible overlap between the posteriors for the parameters $\theta_{0}$. The uncorrected model supports this hypothesis (bottom right plot of Figure \ref{fig:anderson2010}), but the \textit{p}-hacking and publication bias models to do not. See the bottom left plot of Figure \ref{fig:anderson2010} for the posteriors for $\theta_0$ in the publication bias model (those obtained with the \textit{p}-hacking model are indistinguishable). In this case, the overlap between the posteriors for the different subsets is not negligible, and there is no evidence against hypotheses of equal $\theta_0$s in both groups. The same conclusion can be reached from Table \ref{tab:Anderson2010} by looking at the posterior standard deviations and posterior means.

\begin{figure}
\includegraphics[width=0.49\textwidth]{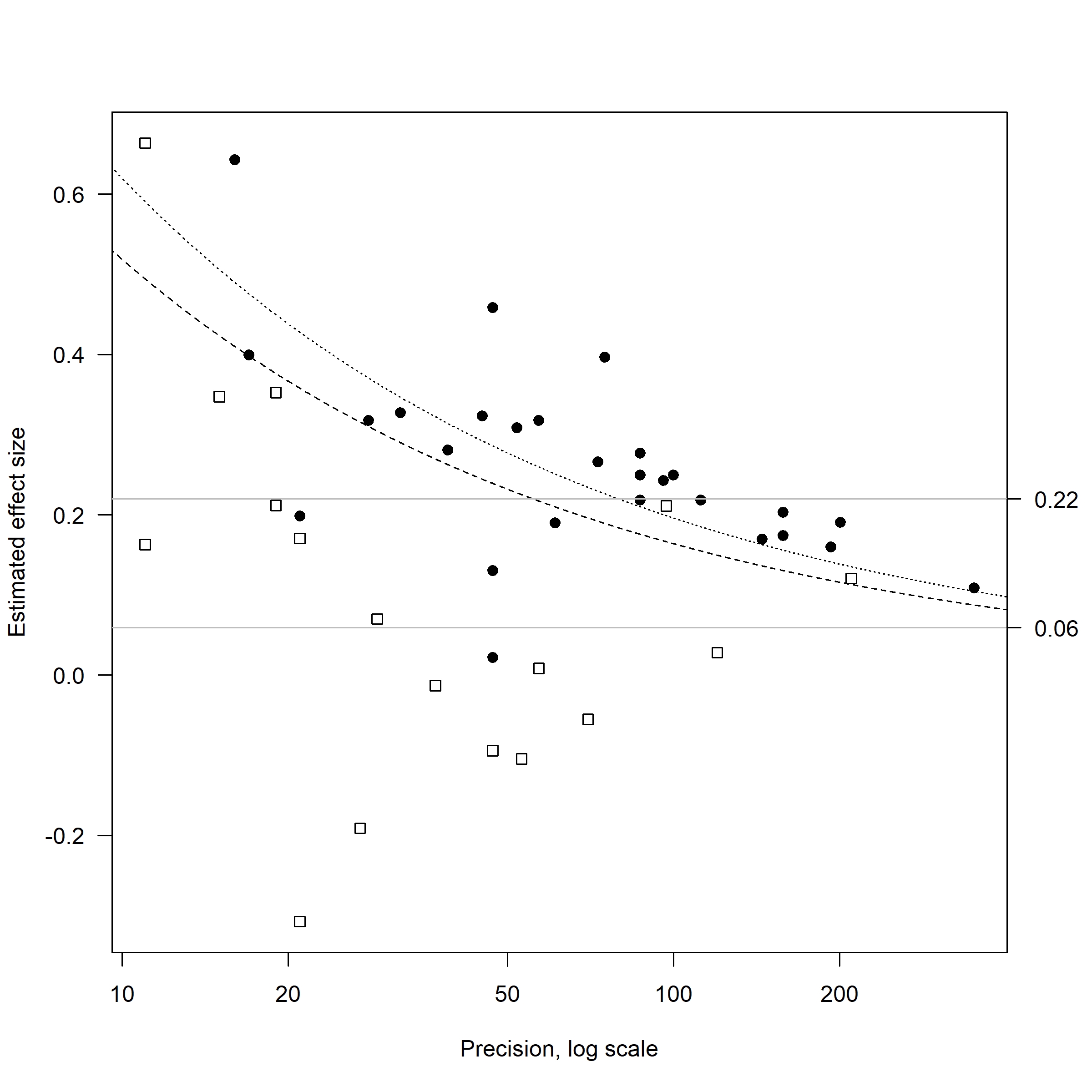}
\includegraphics[width=0.49\textwidth]{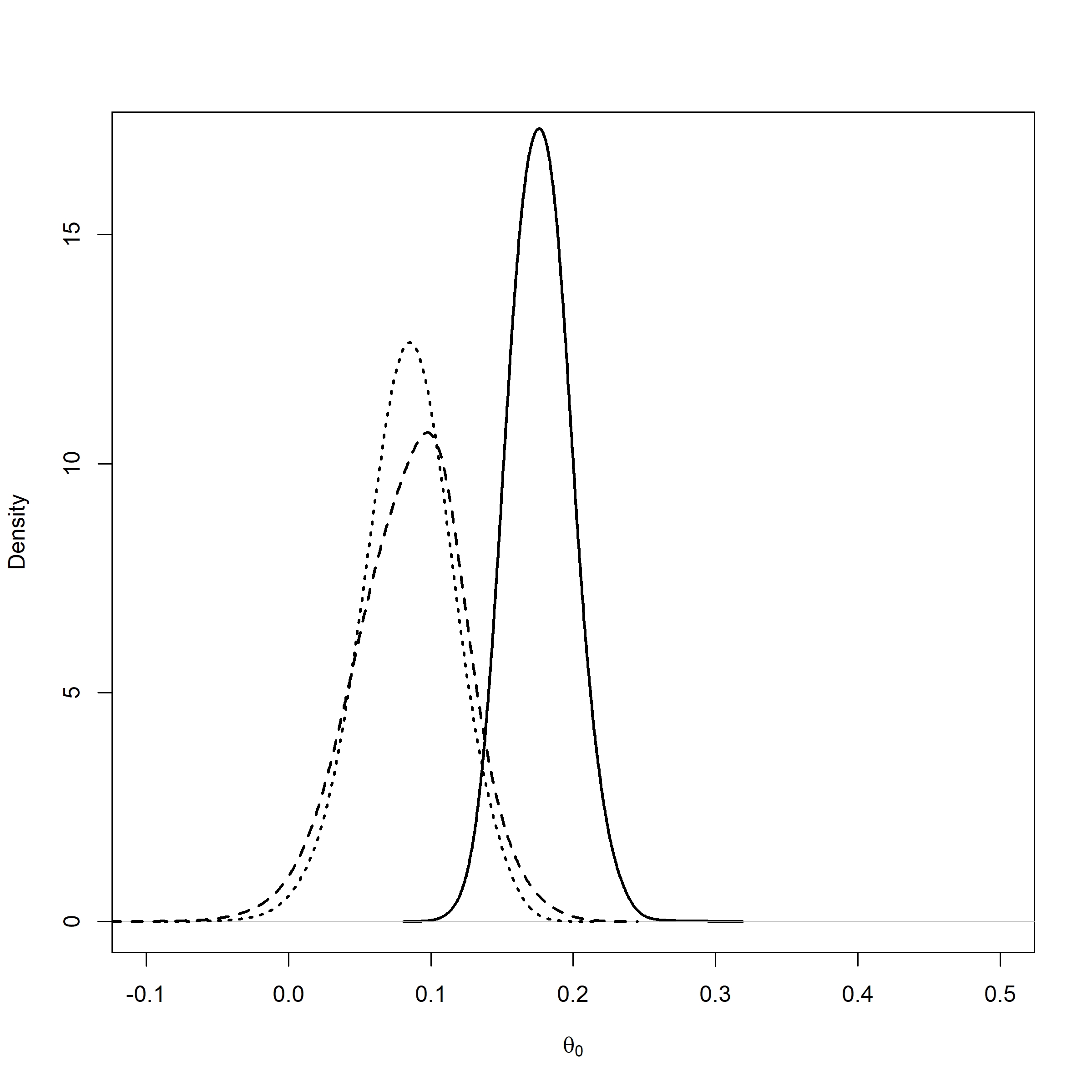}
\includegraphics[width=0.49\textwidth]{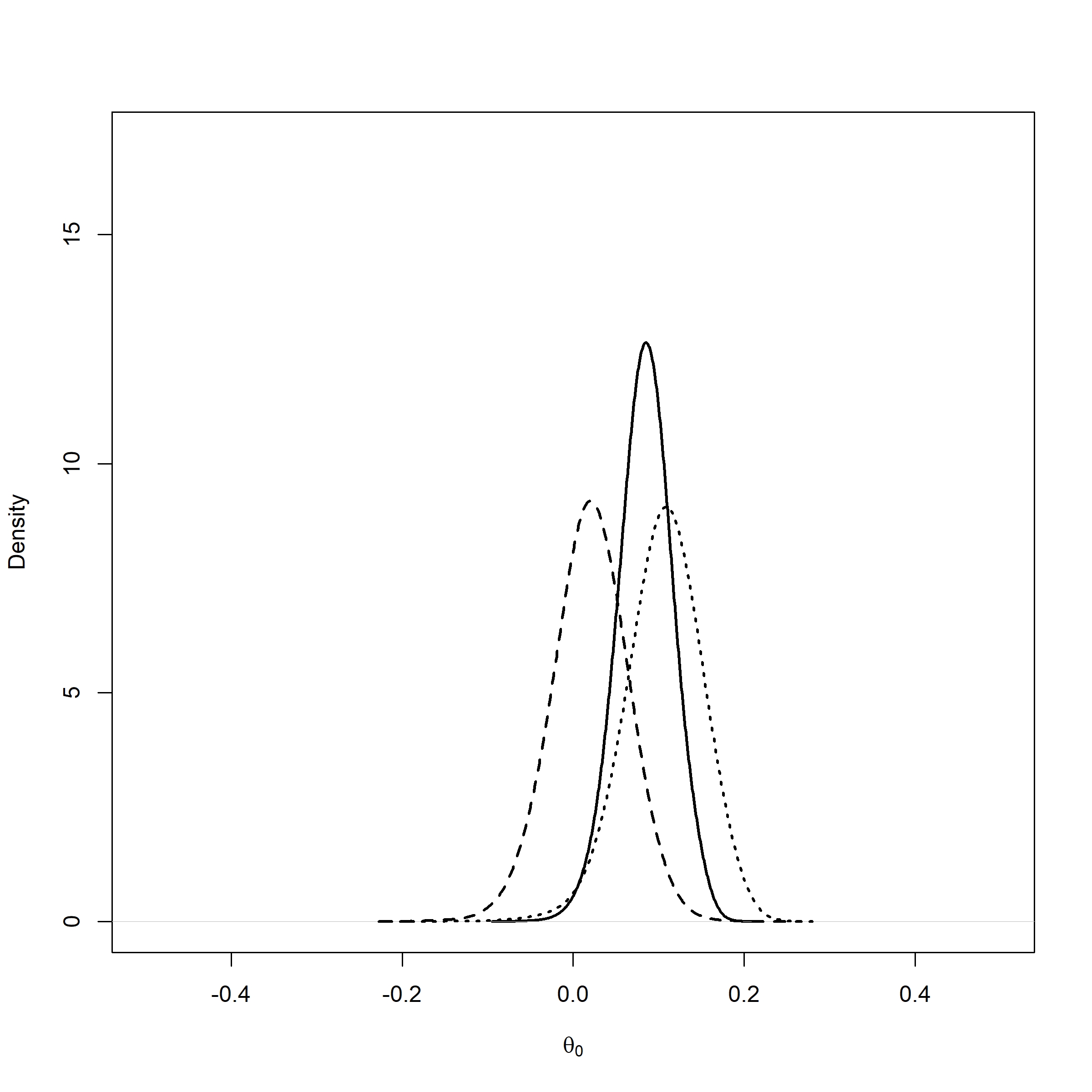}
\includegraphics[width=0.49\textwidth]{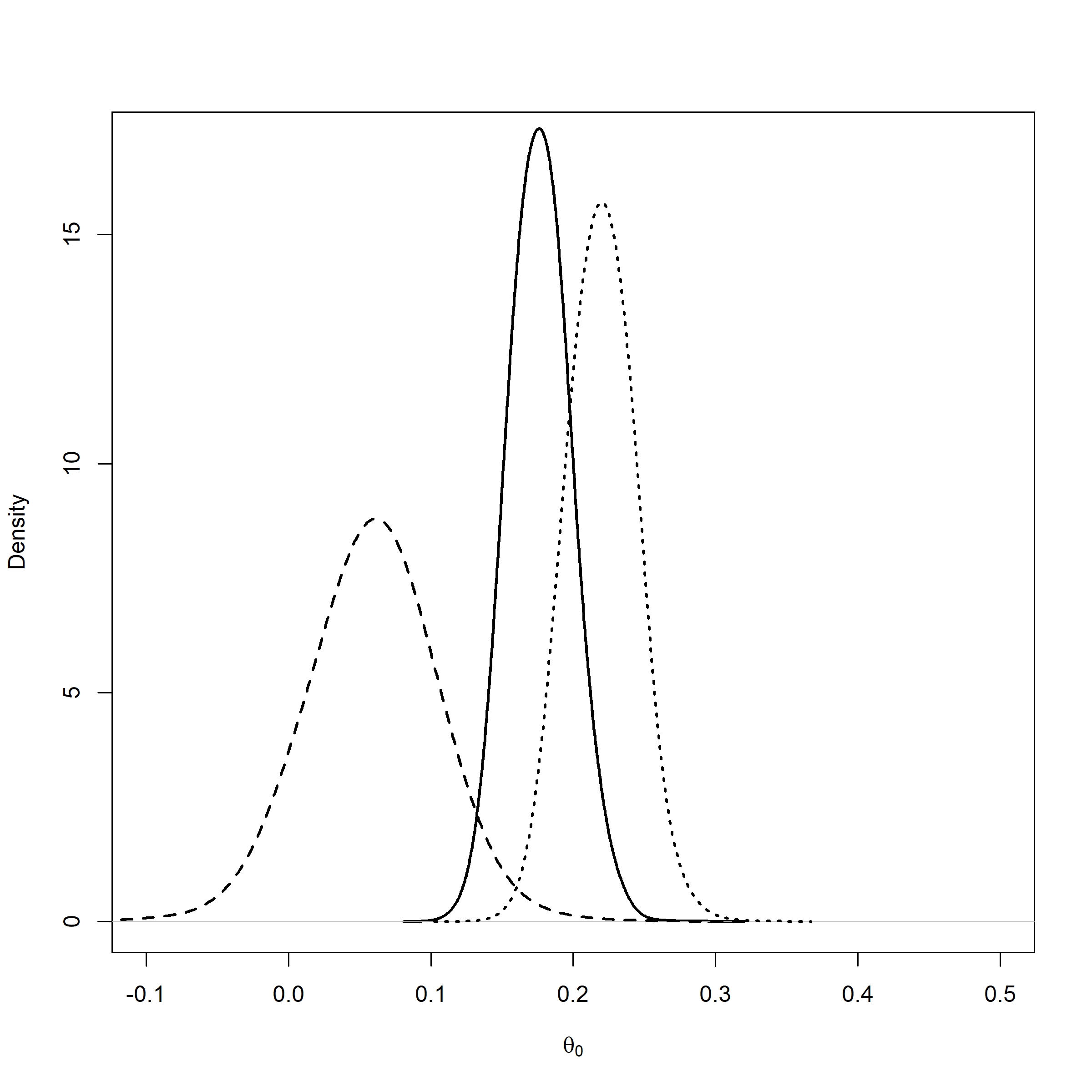}
\caption{\label{fig:anderson2010}Violent video games example with outcome variable aggressive behavior. \textbf{(top-left)} Effect sizes. The dotted black line is $1.96/\textrm{sd}$ and the dashed black line is $1.64/\textrm{sd}$. The ticks on the right hand side are the uncorrected meta-analytical means for each group: $0.29$ for the best practices group, $0.08$ for the rest. The outlier $x=1.33$ has been removed from the plot.
\textbf{(top-right)} Posterior densities for $\theta_{0}$ with all experiments included. The dashed density belongs to the \textit{p}-hacking model, the dotted to the publication bias model, and the solid to the uncorrected model. \textbf{(bottom-left)} Posterior densities for $\theta_{0}$ from the publication bias model. The solid curve is the model with all experiments, the dotted curve the model with the best practice experiments, and the dashed line the model without the best experiments. The posteriors for the \textit{p}-hacking model are similar to this one. \textbf{(bottom-right)} Posterior densities for $\theta_{0}$ (solid line: all experiments; dotted line: best practice experiments only; and dashed line without the best experiments) from the uncorrected meta-analysis model.}
\end{figure}

\section{Concluding remarks}\label{sect:conclusions}

In this paper we studied two models to handle the effect of \textit{p}-hacking and publication bias. Although the \textit{p}-hacking model worked really well in the simulation study, we have to admit that the \textit{p}-hacking scenario described in section \ref{subsect:p-hacking} is less plausible than the publication bias scenario of section \ref{subsect:publicationBias}. First, the assumption of Bob's \textit{p}-hacking omnipotence is strong. For while some researchers are able \textit{p}-hackers, most give up at some point. Does truncation actually model \textit{p}-hacking in the wild? Analysing \textit{p}-hacking is hard without serious simplifying assumptions. The model we proposed is interpretable and implementable, and it appears to work well in practice, as one can see in the examples of section \ref{sect:examples}. That said, there is space for further development of models for \textit{p}-hacking.

Regarding possible further development, we are often interested in understanding and modelling the sources of heterogeneity in a meta-analysis \citep{thompson1994systematic}. A way to do this is to let $\theta_{i}$ linearly depend on covariates, in the meta-analysis context known as moderators. If we extend the one-sided discrete models publication bias and \textit{p}-hacking models to include covariates, we will be able to estimate their effect while keeping the \textit{p}-hacking probability or the selection probability fixed. Another option is to allow the \textit{p}-hacking probability or the selection probability to depend on covariates themselves. For instance, the difficulty of \textit{p}-hacking is likely to increase with $n$, the sample size of the study. Similarly, the selection probability is also likely to be influenced by $n$; for example when $n$ is large, null-effects are more publishable.

The notation introduced in section \ref{sec:Selection Sets} can be used to visualize modifications of the two concrete models used in this paper, as in Figure \ref{fig:Plate notation, publication bias and p-hacking}. In the publication bias model, the nuisance parameter $\eta$ (which can include, e.g., the standard deviation $\sigma$) could be put inside the selection plate. In this case, new $\eta$s are drawn until a study is accepted. A possible modification of the \textit{p}-hacking model consists in putting $\theta$ inside the selection set, which makes the researcher draw new $\theta$s every time he attempts a \textit{p}-hack. This could be used to model scenarios where the hypothesis is not
known in advance by the researchers.

We saw in the simulations and in Example \ref{subsec:cuddy2018} that the publication bias and the \textit{p}-hacking models can give remarkably different results even with similar priors and the same $\alpha$ vector. A way to react to this situation is to choose the best-fitting model in terms of, for example, LOOIC. Nevertheless, this may result dangerous, and one should be caution, to not risk to over-interpret the results. More safely, one can present the results of both models and try to understand the differences between them, as we did in the examples of section \ref{sect:examples}. In the publication bias model, it is especially important to be aware of the interpretation of $\theta_{0}$ as the mean of the underlying effect size distribution, not the effect size distribution of the observed studies. Therefore, the best response to the question \enquote{Should one use the \textit{p}-hacking and publication bias model?} is probably \enquote{Use both!}

Finally, it would be interesting to model publication bias and \textit{p}-hacking at the same time:
\begin{quote}
Bob \textit{p}-hacks his research to a \textit{p}-value drawn from $\omega$ and sends it to Alice's journal. Alice accepts the paper with probability $w(u)$. Every rejected study is lost.
\end{quote}
In this scenario the original density $f^{\star}(x_{i}\mid\theta_{i},\eta_{i})$ is transformed twice: First by \textit{p}-hacking, then by publication bias. The resulting model is $$f(x_{i}\mid\theta_{i},\eta_{i})\propto w(u)\int_{[0,1]}f_{[0,\alpha]}^{\star}(x_{i}\mid\theta_{i},\eta_{i})d\omega(\alpha).$$ This is a reasonable model, but its normalizing constant is hard to calculate, even when $\omega$ is discrete and $w$ is a step function. Additional work on this problem is required.

\section*{Appendix}
\begin{proof}[Proof of Proposition \ref{prop:is density}]
\label{proof:is density}
We only need to show that $q_{H}(x)$ integrates to $1$. \begin{eqnarray*}
\int q_{H}(x)dx & = & \int\frac{p(s=1)}{p(s=1\mid x_{H^{c}})}p(x)dx\\
 & = & \int\frac{p(s=1\mid x_{H},x_{H^{c}})}{p(s=1\mid x_{H^{c}})}p(x_{H^{c}}\mid x_{H})p(x_{H})dx_{H^{c}}dx_{H}\\
 & = & \int\frac{p(s=1\mid x_{H^{c}})}{p(s=1\mid x_{H^{c}})})p(x_{H^{c}})dx_{H^{c}})\\
 & = & 1
\end{eqnarray*}
\end{proof}
As mentioned in section \ref{sect:differences}, any \textit{p}-hacking model can be written on the form of a selection model. Observe that
\begin{eqnarray*}
\int_{[0,1]}f_\alpha^{\star}(x_{i}\mid\theta_{i},\eta_{i}, u)d\omega(\alpha) & = & \int_{[0,1]}f(x_{i}\mid\theta_{i})P(u\in\left[0,\alpha\right]\mid\theta_{i},\eta_{i})^{-1}d\omega(\alpha)\\
 & = & f(x_{i}\mid\theta_{i})\int_{[0,u]}P(u\in\left[0,\alpha\right]\mid\theta_{i},\eta_{i})^{-1}d\omega(\alpha).
\end{eqnarray*}
where $f_\alpha^{\star}$ is the density $f^{\star}$ truncated so that the \textit{p}-value $u\in\left[0,\alpha\right]$. This is a publication bias model if $$h(u)=\int_{[0,u]}P(u\in\left[0,\alpha\right]\mid\theta_{i},\eta_{i})^{-1}d\omega(\alpha)$$ is bounded for each $u$ and $h(u)$ is independent of $\theta_{i},\eta_{i}$. While $h(u)$ can be bounded, it is typically dependent of $\theta_{i},\eta_{i}$, with the fixed effect model under complete selection for significance being a notable exception.

On the other hand, any selection model $f(x_{i};\theta_{i},\eta_{i})\rho(u)$ with $I =\int f(x;\theta_{i},\eta_{i})\rho(u)du<\infty$ can be written as a mixture model. For then there is a finite measure $d\omega(\alpha;\theta_{i},\eta_{i})$ satisfying 
\[
\rho(u)=\int_{[0,u]}\frac{1}{P(u\in\left[0,\alpha\right]\mid\theta,\eta)}d\omega(\alpha;\theta_{i},\eta_{i})
\]
Just take $d\omega(\alpha;\theta_{i},\eta_{i})=d\rho(\alpha)P(u\in\left[0,\alpha\right]\mid\theta_{i},\eta_{i})$, where $d\rho(\alpha)$ is defined by $\int_{0}^{u}d\rho(\alpha)=\rho(u)$. The size of the measure is
\begin{eqnarray*}
\int_{0}^{1}d\omega(\alpha;\theta_{i},\eta_{i}) & = & \int_{0}^{1}P(u\in\left[0,\alpha\right]\mid\theta_{i},\eta_{i})d\rho(\alpha)\\
 & = & \int_{0}^{1}f(u;\theta_{i},\eta_{i})\int_{0}^{u}d\rho(\alpha)du\\
 & = & I
\end{eqnarray*}
Hence $I_{\theta,\eta}d\omega'(\alpha;\theta_{i},\eta_{i})$ is a probability measure. This probability measure makes 
\[
I^{-1}f(x_{i};\theta_{i},\eta_{i})\rho(u)=\int_{[0,1]}f_\alpha(x_{i};\theta_{i},\eta_{i})d\omega'(\alpha)
\]
as can be seen by the following computation,
\begin{eqnarray*}
I^{-1}f(x_{i};\theta_{i},\eta_{i})\rho(u) & = & I^{-1}\int_{[0,u]}\frac{f(x_{i};\theta_{i},\eta_{i})}{P(u\in\left[0,\alpha\right]\mid\theta_{i},\eta_{i})}d\omega(\alpha)\\
 & = & I^{-1}\int_{[0,1]}\frac{f(x_{i};\theta,\eta)1_{\left[0,\alpha\right]}(u)}{P(u\in\left[0,\alpha\right]\mid\theta_{i},\eta_{i})}d\omega(\alpha)\\
 & = & I^{-1}\int_{[0,1]}f_\alpha(x_{i};\theta_{i},\eta_{i})d\omega(\alpha)\\
 & = & \int_{[0,1]}f_\alpha(x;\theta_{i},\eta_{i})d\omega'(\alpha)
\end{eqnarray*}

Proposition \ref{prop:One-sided normal discrete probability vector publication bias model-1} shows the form of the one-sided normal step function selection probability publication bias model when it is written as a mixture model of the form \eqref{eq:p-hacking model}. But most such mixture models are not true \textit{p}-hacking models, as the mixing probabilities $\pi_{i}^{\star}$ depend on $\theta$. There is no way for the \textit{p}-hacker to know
$\theta$, so we cannot regard the publication bias model as a \textit{p}-hacking model.

\bibliographystyle{biom}
\bibliography{main.bib}

\label{lastpage}
\end{document}